\newenvironment{packeditemize}{\begin{list}{$\bullet$}{\setlength{\itemsep}{0pt}\addtolength{\labelwidth}{-5pt}\setlength{\leftmargin}{\labelwidth}\setlength{\listparindent}{\parindent}\setlength{\parsep}{0pt}\setlength{\topsep}{3pt}}}{\end{list}}
\newcommand{\cut}[1]{}
\newcommand{\eps}{\varepsilon}
\newtheorem{theorem}{Theorem}
\newtheorem{corollary}{Corollary}
\newtheorem{lemma}{Lemma}
\begin{document}


\title{Plankton: Scalable network configuration verification through model checking}
\author{
Santhosh Prabhu, Kuan-Yen Chou, Ali Kheradmand, P. Brighten Godfrey, Matthew Caesar\\
University of Illinois at Urbana-Champaign\\
}


\maketitle
\vskip -2cm
\pagestyle{empty}
\begin{abstract}
Network configuration verification enables operators to ensure that the network will behave as intended, prior to deployment of their configurations. Although techniques ranging from graph algorithms to SMT solvers have been proposed, scalable configuration verification with sufficient protocol support continues to be a challenge.  In this paper, we show that by combining equivalence partitioning with explicit-state model checking, network configuration verification can be scaled significantly better than the state of the art, while still supporting a rich set of protocol features. We propose Plankton, which uses symbolic partitioning to manage large header spaces and efficient model checking to exhaustively explore protocol behavior. Thanks to a highly effective suite of optimizations including state hashing, partial order reduction, and policy-based pruning, Plankton successfully verifies policies in industrial-scale networks quickly and compactly, at times reaching a 10000$\times$ speedup compared to the state of the art.
\end{abstract}

 \vspace{-0.3cm}
\section{Introduction}
\label{sec:intro}

Ensuring correctness of networks is a difficult and critical task. A growing number of network verification tools are targeted towards automating this
process as much as possible, thereby reducing the burden on the network operator. Verification platforms have improved steadily in the recent years,
both in terms of scope and scale. Starting from offline data plane verification tools like Anteater~\cite{cite:anteater} and HSA~\cite{cite:hsa},
the state of the art has evolved to support real-time data plane verification~\cite{cite:veriflow, cite:netplumber}, and more recently, analysis
of configurations~\cite{cite:batfish, cite:era, cite:arc, cite:minesweeper, cite:bagpipe}.

Configuration analysis tools such as Batfish~\cite{cite:batfish}, ERA~\cite{cite:era}, ARC~\cite{cite:arc} and Minesweeper~\cite{cite:minesweeper} are designed to take as input a given network configuration, a correctness specification, and possibly an ``environment'' specification, such as the maximum expected number of failures, external route advertisements, etc. Their task is to determine whether, under the given environment specification, the network configuration will always meet the correctness specification. As with most formal verification domains, the biggest challenge in configuration analysis is scalability. Being able to analyze the behavior of multiple protocols executing together is a non-trivial task. Past verifiers have used a variety of techniques to try to surmount this scalability challenge. While many of them sacrifice their correctness or expressiveness in the process, Minesweeper maintains both by modeling the network using SMT constraints and performing the verification using an SMT solver. However, we observe that this approach scales poorly with increasing problem size (4$+$ hours to check a 245-device network for loops, in our experiments). So, this paper addresses the following question: {\it Can a configuration verification tool have broad protocol support, and also scale well?}

We begin our work by observing that scalability challenges in configuration verification stem from two factors --- the large space of possible packet headers, and the possibly diverse outcomes of control plane execution, particularly in the presence of failures. We believe that general purpose SAT/SMT techniques are not as well equipped to tackle these challenges as domain-specific techniques specifically designed to address them. In fact, these challenges have been studied extensively, in the domains of data plane verification and software verification. Data plane verification tools analyze the large header space to determine all possible data plane behaviors and check their correctness. Software verification techniques explore the execution paths of software, including distributed software, and identify undesirable executions that often elude testing. Motivated by the success of the analysis algorithms in these domains, we attempted to combine the two into a scalable configuration verification platform. The result --- Plankton --- is a configuration verifier that uses equivalence partitioning to manage the large header space, and explicit-state model checking to explore protocol execution. Thanks to these efficient analysis techniques, and an extensive suite of domain-specific optimizations, Plankton delivers consistently high verification performance.

Our key contributions are as follows:
\begin{packeditemize}
\item We define a configuration verification paradigm that combines packet equivalence computation and explicit-state model checking.
\item We develop optimizations that make the design feasible for networks of practical scale, including optimizations to reduce the space of event exploration,
and techniques to improve efficiency of exploration.
\item We implement a Plankton prototype with support for OSPF, BGP and static routing, and show experimentally that it can verify policies at practical scale (less than a second for networks with over 300 devices). Plankton outperforms the state of the art in all our tests, in some cases by as much as 4 orders of magnitude.
\end{packeditemize}

\vspace{-0.62cm}
\section{Motivation and Design Principles}
\vspace{-0.18cm}
\label{sec:motivation}

\begin{figure}
    \centering
   \scalebox{0.6}{
    \begin{tabular}{ l|c|c|c|c|c|c }
      \hline
      {\bf \small Feature} & {\bf \small Batfish} & {\bf \small BagPipe} & {\bf \small ARC} & {\bf \small ERA} & {\bf \small Minesweeper} & {\bf \small Plankton}\\
      \hline
      \hline
      {All data planes, including failures} & $\Circle$ & $\Circle$ & $\LEFTcircle$ & $\Circle$ & $\CIRCLE$ & $\CIRCLE$\\
      {Support beyond specific protocols} & $\CIRCLE$ & $\Circle$ & $\Circle$ & $\CIRCLE$ & $\CIRCLE$ & $\CIRCLE$\\
      {Soundness when assumptions hold} & $\CIRCLE$ & $\CIRCLE$ & $\CIRCLE$ & $\ \LEFTcircle^*$ & $\CIRCLE$ & $\CIRCLE$\\
      \hline
     \multicolumn{7}{l}{\small{\textit{$*$ For segmentation policies only}}}
    \end{tabular}
    }
    \caption{Comparison of configuration verification systems}
    \label{stateoftheart}
    \vspace{-0.5em}
\end{figure}
Configuration verifiers take as input the configuration files for network devices, and combine them with an abstraction of the lower layers --- the distributed control plane, which executes to produce data plane state, and the forwarding hardware that will use that state to process traffic. They may additionally take an {\em environment specification}, which describes interactions of entities external to the network, such as possible link failures, route advertisements, etc. Their task is to verify that under executions enabled by the supplied configuration and environment, correctness requirements are never violated. Configuration verifiers thus help ensure correctness of proposed configurations prior to deployment.

Figure~\ref{stateoftheart} illustrates the current state of the art in configuration verification. As the figure shows, only Minesweeper~\cite{cite:minesweeper} can reason about multiple possible converged data plane states of the network (e.g., due to topology changes or control plane non-determinism), while also having the ability to support more than just a specific protocol, and maintaining soundness of analysis. All other tools compromise on one or more key features. ARC~\cite{cite:arc}, for example, uses graph algorithms to compute the multiple converged states enabled by failures, but only for shortest-path routing. As a result it cannot handle common network configurations such as BGP configurations that use LocalPref, any form of recursive routing, etc. The reason for the mismatch in Minesweeper's functionality in contrast to others is that it makes a different compromise. By using an SMT-based formulation, Minesweeper is able to achieve good data plane coverage and feature coverage, but pays the price in performance. As experiments show~\cite{cite:cp-compression}, Minesweeper scales poorly with network size, and is unable to handle networks larger than a few hundred devices in reasonable time. Our motivation for Plankton is simple --- can we design a configuration verification tool without compromising scale or functionality?

Achieving our goal requires tackling two challenges: packet diversity and data plane diversity. Packet diversity, which refers to the large space of packets that needs to be checked, is easier to handle. We leverage the notion of {\em Packet Equivalence Classes} (PECs), which are sets of packets that behave identically. Using a trie-based technique similar to VeriFlow~\cite{cite:veriflow}, we compute PECs as a partitioning of the packet header space such that the behavior of all packets in a PEC remains the same throughout Plankton's exploration. A more interesting aspect of PECs is how to handle dependencies across PECs without compromising performance. In Plankton, this is done by a dependency-aware scheduler designed to maximize independent analysis (\S~\ref{subsec:scheduling}).

Data plane diversity refers to the complexity of checking every possible converged data plane that an input network configuration may produce. It is the task of the control plane model to ensure coverage of these possible outcomes. Simulation-based tools (the best example being Batfish~\cite{cite:batfish}) execute the system only along a single non-deterministic path, and can hence miss violations in networks that have multiple stable convergences, such as certain BGP configurations. ARC's graph-based approach accounts for possible failures, but can support only shortest-path routing. In order to overcome these shortcomings, Minesweeper, the current state of the art in terms of functionality, uses an SMT solver to search through possible failures and non-deterministic protocol convergence, to find any converged state that represents a violation of network correctness.

A key intuition behind our approach is that the \emph{generic search technique employed by SMT solvers makes the core of the configuration verification problem much more difficult than it has to be}. Network control planes operate using simple algorithms which can not only be easily modeled in software, but can also find a protocol's outcome much more quickly than general-purpose SMT solving. In fact, the common case is that the control plane computes some variant of shortest or least-cost paths.  To illustrate this point, we implemented simple single-source shortest path solvers in SMT (Z3) and a model checker (SPIN).  The SMT formulation is implemented as constraints on the solution, while the model checker explores execution paths of the Bellman-Ford algorithm; and in this simplistic case the software has deterministic execution. The result is that the model checker approach is similar to a normal execution of software, and is around $12$,$000\times$ faster even in a moderate-sized fat tree network of $N=180$ nodes (Figure~\ref{fig:smtvsdijkstra}).


\begin{figure}[b]
\begin{center}
    \vspace{-0.5cm}
\includegraphics[width=0.6\columnwidth, keepaspectratio]{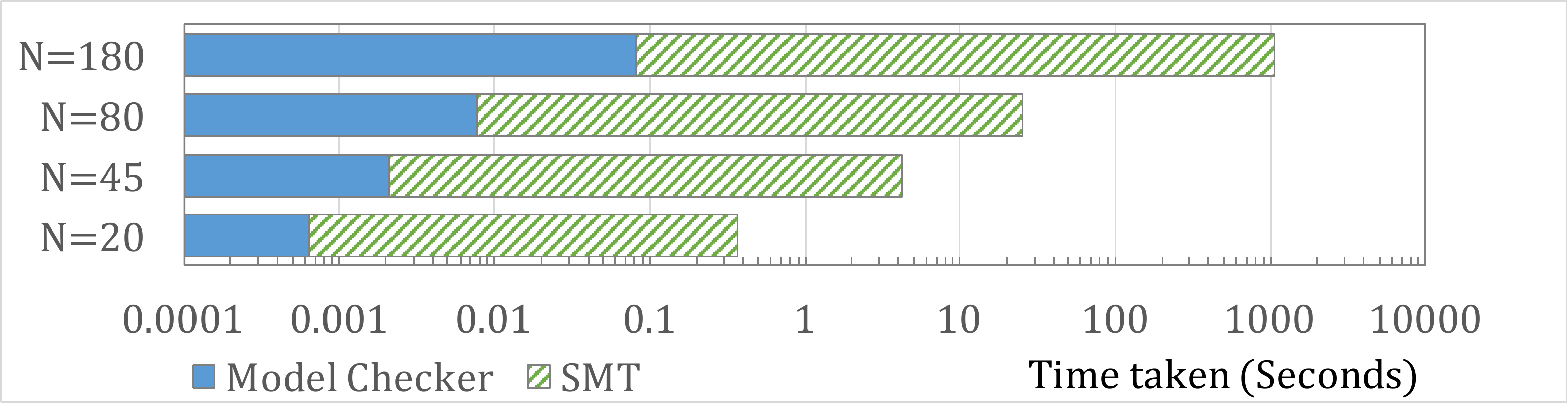}
\vspace{-0.5cm}
\end{center}
\caption{Comparison of two ways to compute shortest paths.}
\label{fig:smtvsdijkstra}
\end{figure}

Of course, this is intentionally a simplistic, fully-deterministic case with simple implementations.  The point is that the model checking approach begins with a huge advantage --- so huge that it could explore many non-deterministic execution paths and still outperform SMT. This leads to our next key intuition: \emph{the effect of non-determinism is important, but the amount of ``relevant'' non-determinism is limited}. Networks can experience ``relevant'' non-determinism like the choice of what failures occur, and protocol execution; as well as ``irrelevant'' non-determinism like message timing that ultimately doesn't affect the outcome. Configurations and protocols are usually designed to keep the outcome mostly deterministic, with non-deterministic branch points ultimately leading to one or a small number of different converged states.

Motivated by this intuition, we create a control plane model that incorporates the possible non-deterministic behaviors, but we also implement optimizations so that when the {\em outcome} of the executions is actually deterministic, the ``irrelevant'' non-determinism is pruned enough that performance is comparable to simulation.  This model is exhaustively explored by SPIN, a model checker designed for Promela programs. SPIN performs a depth-first search on the state space of the program, looking for states that violate the policy being checked. We further assist SPIN through optimizations that minimize the size of individual states, thus making the traversal process more efficient. Thanks to these two types of optimizations, Plankton achieves our goal of scalable, general-purpose configuration verification.


\vspace{-0.4cm}
\section{Plankton Design}
\vspace{-0.2cm}
\label{sec:design}
We now present Plankton's design, illustrated in Figure~\ref{fig:design}.

\vspace{-0.3cm}
\subsection{Packet Equivalence Classes}
\label{subsec:pec}
The first phase in Plankton's analysis is the computation of Packet Equivalence Classes. As we discussed in \S~\ref{sec:motivation}, Plankton uses a trie-based technique
inspired by VeriFlow. The trie in Plankton holds prefixes obtained from the configuration, including any prefixes that are advertised (explicitly or automatically), any prefixes appearing in route maps, any static routes, etc. Each prefix in the trie is associated with a config object, that describes any configuration information that is specific to that prefix. For example, consider Figure~\ref{fig:ec-computation}, which illustrates a highly simplified example where the prefixes \texttt{128.0.0.0/1} and \texttt{192.0.0.0/2} are being advertised over OSPF in a topology with 3 devices. The trie holds three config objects --- the default, and one for each advertised prefix.

Once the trie is populated, Plankton performs a recursive traversal, simultaneously keeping track of where the prefix boundaries define division of the header space. For each known partition, it stores the most up-to-date network-wide config known. When the end of any new prefix is reached, the config object that is associated with it is merged with the network-wide config for the partition that denotes the prefix.  In our simple example, the traversal produces three classes defined by ranges --- \texttt{[192.0.0.0, 255.255.255.255]} with two nodes originating prefixes, \texttt{[128.0.0.0, 191.255.255.255]} with only one origin, and \texttt{[0.0.0.0, 127.255.255.255]} without any node originating any prefix. As the example shows, each PEC-specific configuration computed this way will still include information about the original prefixes contributing to the PEC. Storing these prefixes may seem redundant. However, note that even within a PEC, the lengths of the prefixes that get advertised or get matched in route filters play a role in decision making.

\begin{figure}
\begin{center}
\includegraphics[width=0.6\columnwidth, keepaspectratio]{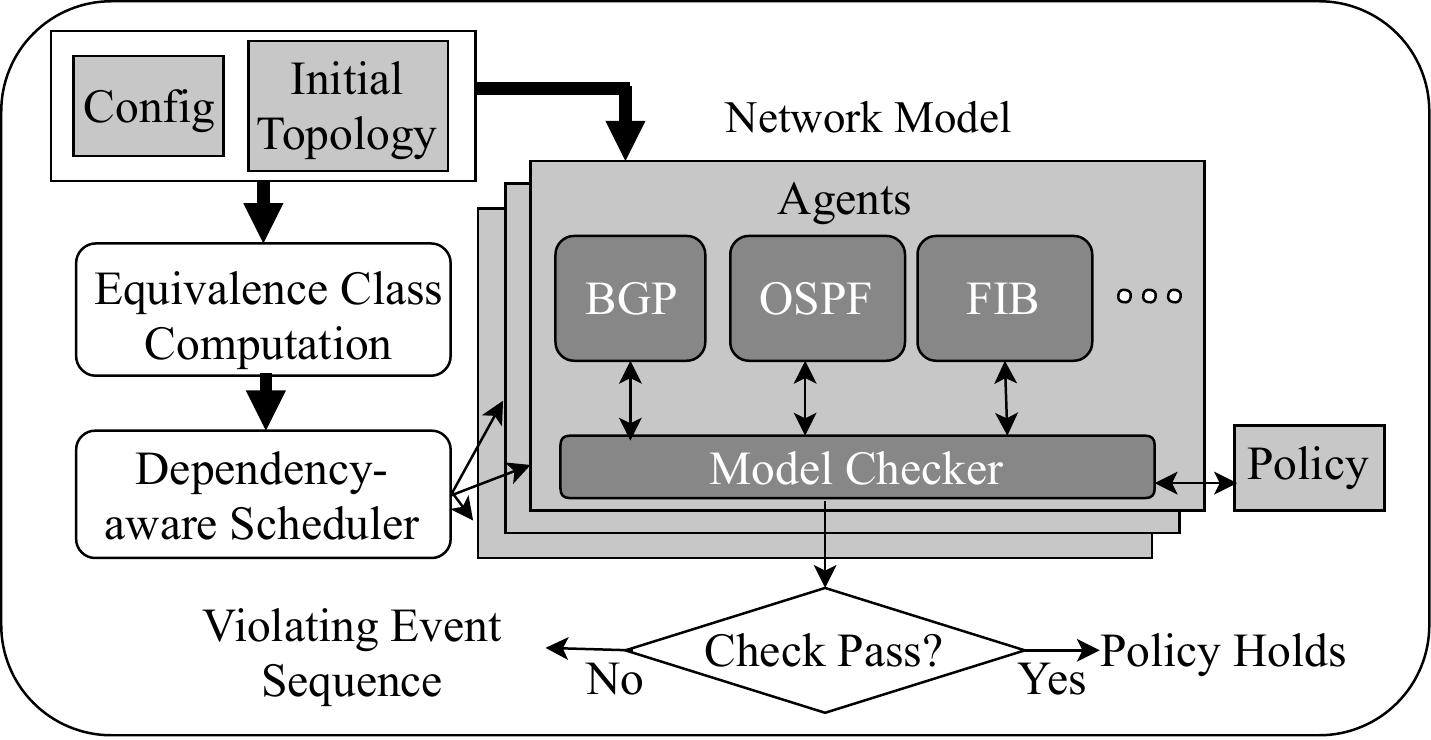}
\end{center}
\vspace{-0.5cm}
\caption{Plankton design}
\label{fig:design}
\vspace{-0.5cm}
\end{figure}

\begin{figure*}
\centering
\begin{minipage}[b]{.75\textwidth}
\includegraphics[width=\textwidth]{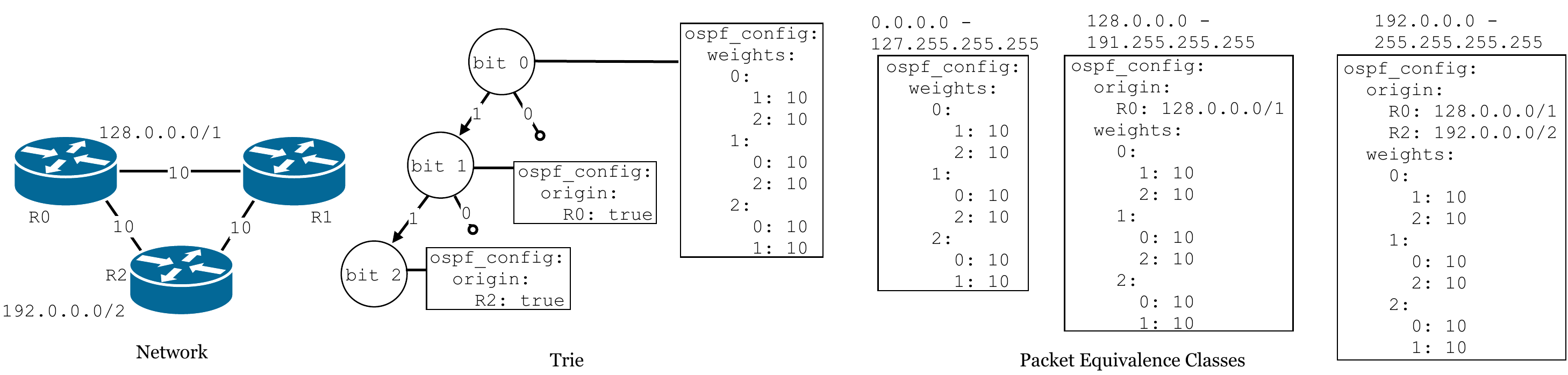}
\caption{Packet Equivalence Class computation}
\label{fig:ec-computation}
\end{minipage}\qquad
\begin{minipage}[b]{.2\textwidth}
\includegraphics[width=0.8\textwidth]{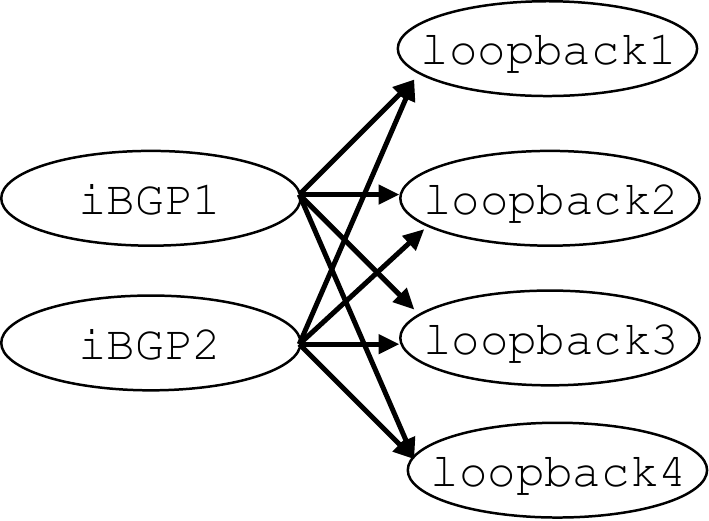}
\caption{PEC Dependency Graph}
\label{fig:ec-dependency}
\end{minipage}
\vspace{-0.25cm}
\end{figure*}

\vspace{-0.3cm}
\subsection{Dependency-aware Scheduling}
\label{subsec:scheduling}
It is tempting to believe that Packet Equivalence Classes could be analyzed fully independently of each other, and that an embarrassingly parallel scheme could be used in the verification process. While this is indeed true sometimes, there can often be dependencies between various PECs. For example, consider a network that is running iBGP. For the various peering nodes to be able to communicate with each other, an underlying routing protocol such as OSPF should first establish a data plane state that forwards packets destined to the devices involved in BGP. In such a network, the manner in which OSPF determines the forwarding behavior for the device addresses will influence the routing decisions made in BGP. In other words, the PECs that are handled by BGP depend on the PECs handled by OSPF. In the past, configuration verification tools have either ignored such cases altogether, or, in the case of Minesweeper, modeled these classes simultaneously. Specifically, for a network of with $n$ routers running iBGP, Minesweeper creates $n+1$ copies of the network, searching for the converged solution for the $n$ loopback addresses and also BGP. Effectively, this turns the verification problem into one quadratically larger than the original. Given that configuration verification scales significantly worse than linearly in input size, such a quadratic increase in input size often makes the problem intractable.

Plankton goes for a more surgical approach. Once the PECs are calculated, Plankton identifies dependencies between the Packet Equivalence Classes, based on recursive routing entries, BGP sessions, etc. The dependencies are stored as a directed graph, whose nodes are the PECs, and directed edges indicate which PECs depend on which others. In order to maximize parallelism in verification runs across PECs, a dependency-aware scheduler first identifies strongly connected components in the graph. These SCCs represent groups of PECs that are mutually dependent, and hence need to be analyzed simultaneously through a single verification run. In addition, if an SCC is reachable from another, it indicates that the upstream SCC can be scheduled for analysis only after the one downstream has finished. Each verification run is a separate process. For an SCC $S$, if there is another SCC $S'$ that depends on it, Plankton forces all possible outcomes of $S$ to be written to an in-memory filesystem ($S$ will always gets scheduled first). Outcomes refer to every possible converged state for $S$, together with the non-deterministic choices made in the process of arriving at them. When the verification of $S'$ gets scheduled, it reads these converged states, and uses them when necessary.

Minesweeper's technique of replicating the network roughly corresponds to the case where all PECs fall into a single strongly connected component. We expect this
to almost never be the case. In fact, in typical cases, the SCCs are likely to be of size 1, meaning that every PEC can be analyzed in isolation, with ordering of the runs being the only constraint\footnote{An example where the SCCs are bigger than one PEC is the contrived case where there exists a static route for destination IP A whose next hop is IP B, but another static route for destination IP B whose next hop is IP A.}.
For example, Figure~\ref{fig:ec-dependency} illustrates the dependency graph for a typical case where two PECs are being handled in iBGP on a network with 4 different routers. The only
constraint in this case is that the loopback PECs should be analyzed before the iBGP PECs can start. In such cases,
Plankton's keeps the problem size significantly smaller, and maximizes the degree of parallelism that can be achieved.

When a PEC needs the relevant converged states of past PECs for its exploration, the non-deterministic choices may need to be coordinated across all these PECs.  In particular, consider the choice of link failures: if we hypothetically executed one PEC assuming link $L$ has failed and another PEC assuming $L$ is working, the result represents an invalid execution of the overall network.  Therefore, our current prototype matches topology changes across explorations.  A second class of non-deterministic choices is protocol non-determinism.  In our experiments, we have not seen cases of protocol non-determinism that requires matching across PECs. OSPF by its nature has deterministic outcomes, but on networks which have non-determinism in their internal routing (e.g., non-deterministically configured BGP for internal routing) and where message timing is correlated across PECs (e.g., via route aggregation), the system would need to coordinate this non-determinism to avoid false positives.

\vspace{-0.2cm}
\subsection{Explicit-state Model Checker}
The explicit state model checker SPIN~\cite{cite:spin} provides Plankton its exhaustive exploration ability. SPIN verifies models written in the Promela modeling language, which has constructs to describe possible non-deterministic behavior. SPIN's exploration is essentially a depth-first search over the possible states of the supplied model. Points in the execution where non-deterministic choices are made represent branching in the state space graph.

Plankton's network model is essentially an implementation of the control plane in Promela. Our current implementation supports OSPF, BGP and static routing. Recall from \S~\ref{subsec:pec} that Plankton partitions the header space into Packet Equivalence Classes. For each SCC, Plankton uses SPIN to exhaustively explore control plane behavior. In order to improve scalability, Plankton also performs another round of partitioning by executing the control plane for each prefix in the PEC separately. This separation of prefixes is helpful in simplifying the protocol model. However, it does limit Plankton's support for route aggregation. While features such as a change in the routing metric can be supported, if there is a route map that performs an exact match on the aggregated prefix, it will not apply to the more specific routes, which Plankton models. Once the converged states of all relevant prefixes are computed, a model of the FIB combines the results from the various prefixes and protocols into a single network-wide data plane for the PEC.

In what follows, we present Plankton's network model that will be executed by SPIN.  We will initially present a simple, unoptimized model, which is functionally correct but has significant non-determinism that is irrelevant to finding different converged states. Subsequently, in \S~\ref{sec:optimizations}, we discuss how Plankton attempts to minimize irrelevant non-determinism, making the execution of the deterministic fragments of the control plane comparable to simulation.

 \vspace{-0.4cm}
\subsection {Abstract Protocol Model}
To define a control plane suitable for modeling real world protocols such as BGP and OSPF, we look to the technique used by Minesweeper wherein the protocols were modeled as instances of the
stable paths problem. Along similar lines, we consider the Simple Path Vector Protocol~\cite{cite:stablePaths}, which was originally proposed to solve the stable paths problem. We first extend SPVP to support some additional features that we wish to model. Based on this, we construct a protocol we call the {\it Reduced Path Vector Protocol}, which we show to be sufficient to correctly perform model checking, if we are interested only in the converged states of the network. We use RPVP as the common control plane protocol for Plankton. We begin with a brief overview of SPVP, highlighting our extensions to the protocol. Appendix~\ref{appendix:spvp} contains the full details of the protocol and our extensions.

\newcommand{\ribinn}{\operatorname{rib-in}_n}
\newcommand{\ribin}{\operatorname{rib-in}}
\newcommand{\peers}{\operatorname{peers}}
\newcommand{\bestpath}{\operatorname{best-path}}
\newcommand{\importf}[2]{\operatorname{import}_{#1,#2}}
\newcommand{\exportf}[2]{\operatorname{export}_{#1,#2}}

\vspace{-0.4cm}
\subsubsection{SPVP and its extension}
\label{sec:spvp}
SPVP is an abstract model of real-world BGP, replacing the details of BGP configurations with abstract notions of import/export filters and ranking functions. For each node $n$ and peer $n'$, the import and export filters dictate which advertisements (i.e. the advertiser's best path to the \emph{origin}) $n$ can receive from and send to $n'$, respectively. The ranking function for $n$ dictates the preference among all received advertisements. These notions can be inferred from real-world configurations. 

We slightly extend the original SPVP~\cite{cite:stablePaths} to support more features of BGP. The extensions are as follows: we allow for multiple origins instead of a single one; the ranking function can be a \emph{partial order} instead of a total one to allow for time based tie breaking; and to be able to model iBGP, we allow the ranking function of any node to change at any time during the execution of protocol. 

It is well known that there are configurations which can make SPVP diverge in some or all execution paths. However, our goal is only to check the forwarding behavior in the converged states, through explicit-state model checking. So, we define a much simpler model that can be used, without compromising the soundness or completeness of the analysis (compared to SPVP).
\vspace{-0.3cm}
\subsubsection {Reduced Path Vector Protocol (RPVP)}
\label{subsec:rpvp}
We now describe RPVP, which is specifically designed for explicit-state model checking of the converged states of the extended SPVP protocol.

\newcommand{\invalid}{\operatorname{invalid}}

\begin{algorithm}[t]
\scriptsize

\begin{algorithmic}[1]

\Procedure {RPVP}:

  \State Init : $\forall n \in N - Origins . \bestpath(n) \gets \bot$
  \State Init : $\forall o \in Origins . \bestpath(o) = \eps$
  \While {true}:
     \State $E \gets \{n \in N | \invalid(n) \lor \exists n' \in \peers(n) . \operatorname{can-update_n}(n') \}$
     \If{$E = \emptyset$}:
       \State break
     \EndIf

     \State $n \gets \operatorname{nondet-pick}(E)$
     \If{$\invalid(n)$}
        \State $\bestpath(n) \gets \bot$
     \EndIf
     \State $\mathit{U} \gets \operatorname{best}(\{n' \in \peers(n) | \operatorname{can-update_n}(n')\})$\label{line:bestpeer}
     \State $n' \gets \operatorname{nondet-pick}(\mathit{U})$
     \State $p \gets \importf{n}{n'}(\exportf{n'}{n}(\bestpath(n')))$
     \State $\bestpath(n) \gets p $
 \EndWhile
\EndProcedure
 \end{algorithmic}
 \caption{RPVP}
 \label{spvp}
 \end{algorithm}

In RPVP, the message passing model of SPVP is replaced with a shared memory model. The network state only consists of the values of the best known path of each node at each moment ($\bestpath$).
In the initial state, the best path of all nodes is $\bot$, except origins, whose best path is $\eps$.
At each step, the set of all enabled nodes ($E$) is determined (Algorithm~\ref{spvp}, line 5). A node $n$ is considered
enabled if either i) the current best path $p$ of $n$ is invalid, meaning that the next hop in $p$ has a best path that is not a prefix of $p$.

\scalebox{0.8}{
$\invalid(n) \triangleq \bestpath(\bestpath(n).\operatorname{head})
                \neq \bestpath(n).\operatorname{rest}
$
}

Or ii) there is a node $n'$ among the peers of $n$ that can produce an advertisement which will change the current best path of $n$. In other words, $n'$ has a path better than the
current best path of $n$, and the path is acceptable according to the export and import policies
of $n'$ and $n$ respectively.

\scalebox{0.8}{
  $\operatorname{can-update_n}(n') \triangleq
 \operatorname{better}(\operatorname{import_{n,n'}}(\operatorname{export_{n',n}}(\operatorname{best}(n')), \operatorname{best}(n))$
 }

Where $better_n(p,p')$ is true when path $p$ is preferred over $p'$ according to the
ranking function of $n$.

At any step of the execution, if there is no enabled node, RPVP has reached a converged state.
Otherwise a node $n$ is non-deterministically picked among the enabled set (line 9).
If the current best path of $n$ is invalid, the best path is set to $\bot$.
Among all peers of $n$ that can produce advertisements that can update the best path of $n$,
the neighbors that produce the highest ranking advertisements are selected (line 13).
Note that in our model we allow multiple paths to have the same rank,
so there may be more than one elements in the set $U$.
Among the updates, one peer $n'$ is non-deterministically selected and
the best path of $n$ is updated according to the advertisement of $n'$ (lines 14-16).
By the end of line 16, an iteration of RPVP is finished. Note that there are no
explicit advertisements propagated; instead nodes are polled for the advertisement
that they would generate based on their current best path when needed.
The the protocol terminates once a converged state for the target equivalence class is reached.
RPVP does not define the semantics of failure or any change to the ranking functions.
Any topology changes to be verified are made before the protocol starts its execution and the latest version of the ranking functions are considered.

A natural question is whether or not performing analysis using
RPVP is sound and complete with respect to SPVP. Soundness is trivial as each step of RPVP can be simulated using a few steps of SPVP.  If we are only concerned about the converged states, RPVP is complete as well:


\newcommand{\converged}{\operatorname{converged}}

\begin{theorem}
For any converged state reachable from the initial state of the network with a
particular set of links $L$ failing at certain steps during the execution of
SPVP, there is an execution of RPVP with the same import/export filters and
ranking functions equal to the latest version of ranking functions in the
execution of SPVP, which starts from the initial state in which all links in $L$
have failed before the protocol execution starts, and reaches the same converged
state. Particularly, there is a such execution in which at each step, each node
takes a best path that does not change during the rest of the execution.
\label{th:osvp}
\end{theorem}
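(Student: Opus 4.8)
The plan is to treat the theorem as a \emph{completeness} statement and prove it constructively: given a converged state $\sigma^\star$ reached by some SPVP execution (with links $L$ failing at various steps and with its final, latest ranking functions), I would build a single RPVP execution, on the topology in which all of $L$ are failed from the start, that reaches exactly $\sigma^\star$ by assigning each node its $\sigma^\star$-value once and for all. The central observation is that a converged SPVP state is nothing more than a \emph{stable assignment}: a fixed point of the final-topology, latest-ranking system. Concretely, for every node $n$ with $\bestpath(n)=p$ and $n' = p.\operatorname{head}$, stability gives (i) consistency, $\bestpath(n') = p.\operatorname{rest}$, so $n$ is not $\invalid$; and (ii) optimality, no peer $m\in\peers(n)$ offers an advertisement $\importf{n}{m}(\exportf{m}{n}(\bestpath(m)))$ that is strictly $\operatorname{better}_n$ than $p$, for otherwise $n$ would still be enabled, contradicting convergence. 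Since all of $L$ have failed by the time $\sigma^\star$ is reached and RPVP freezes the topology before it starts, $\sigma^\star$ is a legitimate fixed point for the topology RPVP runs on, and the timing of the failures in SPVP is irrelevant to the fixed point itself.

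The next step is to extract an ordering of the nodes from $\sigma^\star$. By the consistency property, the ``next-hop'' relation $n \mapsto p.\operatorname{head}$ decreases converged path length by exactly one and terminates at origins (whose best path is $\eps$), so it forms a forest rooted at the origins with no cycles. I would therefore order the nodes by non-decreasing converged-path-length, breaking ties arbitrarily; this guarantees that every node is processed strictly after its converged next hop. Nodes whose converged best path is $\bot$ are simply left untouched (they keep their initial $\bot$), and origins keep their initial $\eps$.

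I would then carry out an induction along this order, maintaining the invariant that every already-processed node holds its $\sigma^\star$-value and is not enabled. When node $n$ (with $\bestpath_{\sigma^\star}(n)=p$, $n'=p.\operatorname{head}$) is processed, $n'$ has already been set to $p.\operatorname{rest}$, so the advertisement $\importf{n}{n'}(\exportf{n'}{n}(\bestpath(n')))$ equals $p$ and can update $n$ from its current $\bot$. Every other peer of $n$ is, at this moment, either unprocessed (still $\bot$, offering nothing usable) or already processed (holding its $\sigma^\star$-value, hence offering a path that is not strictly $\operatorname{better}_n$ than $p$ by optimality of $\sigma^\star$). Thus $p$ is a maximal offer, $n'$ lies in the best-peer set $U$ of line~\ref{line:bestpeer}, and RPVP may non-deterministically select $n'$ and set $\bestpath(n)\gets p$. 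The invariant is then preserved: $n$ cannot become $\invalid$ because $n'$ never changes again, and $n$ cannot be re-enabled because no peer ever presents a path strictly better than $p$ (each peer's value is always either $\bot$ or its fixed $\sigma^\star$-value). Once all nodes are processed, the enabled set $E$ is empty and RPVP has converged to $\sigma^\star$, with each node having taken a best path it never subsequently revised, which is exactly the ``particularly'' clause.

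The main obstacle is the optimality-at-every-moment argument in the presence of a \emph{partial}-order ranking and rank ties. I must phrase optimality as ``no reachable peer value yields an offer strictly $\operatorname{better}_n$ than $p$'' rather than ``$p$ is the unique maximum,'' verify that $n'$ genuinely lands in the tie-set $U$ even when several peers offer equally-ranked paths, and confirm that the $\bot$ value is handled consistently (any valid path beats $\bot$, and $\bot$ offers beat nothing). The supporting structural facts---loop-freeness and consistency of converged paths, and the reconciliation of time-varying link failures and ranking functions in SPVP with RPVP's single frozen topology and latest rankings---are comparatively routine once stability of $\sigma^\star$ is pinned down.
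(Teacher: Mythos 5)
Your proof is correct and arrives at essentially the same construction as the paper --- an RPVP execution in which each node adopts its final path exactly once, pulled from its converged next hop --- but you get there by a genuinely different route. The paper's proof is anchored in the \emph{dynamics} of the SPVP execution $\pi$: it proves a lemma (by induction on position along a best path) that a node must eventually change its path whenever a downstream node's path disagrees with the corresponding suffix, derives the corollary that every node on $n$'s converged path converges strictly before $n$, and then orders the RPVP steps by the convergence times $C_\pi(n)$. You instead discard the SPVP execution entirely once you have extracted the converged state $\sigma^\star$, characterize $\sigma^\star$ as a fixed point (consistency plus optimality under the latest rankings and final topology), and order nodes by converged path length, which by consistency is a valid topological order on the next-hop forest. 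The two orderings serve the same purpose --- guaranteeing the next hop is already set when a node executes --- and the per-step verification (the next hop's offer equals $p$, no peer offers anything strictly better, so the next hop lands in the best-peer set $U$) is essentially identical in both proofs. What your route buys is independence from the SPVP timeline: you need no induction over the execution history, only static stability of the final state, and you are more explicit than the paper about the partial-order and tie-breaking subtleties in the definition of $U$. What it costs is that the fixed-point properties you take as your ``central observation'' --- in particular that converged paths avoid failed links, that consistency of suffixes holds despite mid-execution ranking changes, and that empty buffers imply each rib-in entry reflects its peer's current best path --- are precisely what the paper's lemma establishes; you correctly flag these as needing to be pinned down, and they do require a short argument about re-advertisement in SPVP, so the total work is comparable.
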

\begin{proof}
  The proof can be found in the Appendix.
\end{proof}

Theorem~\ref{th:osvp} implies that performing model checking using the RPVP model is complete. Note that RPVP does not preserve all the transient states and the divergent behaviors of SPVP. This frees us from checking unnecessary states as we are only interested in the converged states. Yet, even the reduced state space has a significant amount of irrelevant non-determinism. Consequently, we rely on a suite of other domain-specific optimizations (\S~\ref{sec:optimizations}) to eliminate much of this non-determinism and make model checking practical.

 Note that our presentation of RPVP has assumed the that a single best path is picked by each node. This matches our current implementation in that we do not support multipath in all protocols. In a special-case deviation from RPVP, our implementation allows a node running OSPF to maintain multiple best paths, chosen based on multiple neighbors. While we could extend our protocol abstraction to allow multiple best paths at each node, it wouldn't reflect the real-world behavior of BGP which, even when multipath is configured, makes routing decisions based on a single best path. However, such an extension is valid under the constrained filtering and ranking techniques of shortest path routing. Our theorems can be extended to incorporate multipath in such protocols. We omit that to preserve clarity.


\vspace{-0.7em}
 \subsection{Policies}
 \vspace{-0.2em}
 \label{subsec:policies}
We primarily target verification of data plane policies over converged states of the network. Similar to VeriFlow~\cite{cite:veriflow}, we don't implement a special declarative language for policies; a policy is simply an arbitrary function computed over a data plane state and returning a Boolean value. Plankton implements a Policy API where a policy supplies a callback which will be invoked each time the model checker generates a converged state. Plankton gives the callback the newly-computed converged data plane for a particular PEC, as well as the relevant converged states of any other PEC that the current PEC depends on. Plankton checks the callback's return value, and if the policy has failed, it writes a trail file describing the execution path taken to reach the particular converged state.

Our API allows a policy to give additional information to help optimize Plankton's search: \emph{source nodes} and \emph{interesting nodes}.  We define two converged data plane states for a PEC to be \emph{equivalent} if their paths from the source nodes have the same length and the same interesting nodes are in the same position on the path. Plankton may suppress checking a converged state if an equivalent one (from the perspective of that policy) has already been checked (\S~\ref{subsec:policy-opt} and \S~\ref{subsec:failures} describe how Plankton does this). If source and interesting nodes are not supplied, Plankton by default assumes that all nodes might be sources and might be interesting.

As an example, consider a waypoint policy: traffic from a set $S$ of sources must pass through firewalls $F$. The policy specifies sources $S$, interesting nodes $F$, and the callback function steps through each path starting at $S$ and fails when it finds a path that does not contain a member of $F$. As another example, a loop policy can't optimize as aggressively: it has to consider all sources.

In general, this API enables \emph{any policy that is a function of a single PEC's converged data plane state}.  We have found it simple to add new policies, currently including: Reachability, Waypointing, Loop Freedom, BlackHole Freedom, Bounded Path Length and Multipath Consistency~\cite{cite:minesweeper}.  We highlight several classes of policies that fall outside this API: (i) Policies that inspect the converged control plane state, as opposed to the data plane: while not yet strictly in the clean API, this information is easily available and we implemented a representative example, Path Consistency (\S\ref{sec:eval}), which asserts that the control plane state as well as the data plane paths for a set of devices should be identical in the converged state (similar to Local Equivalence in Minesweeper~\cite{cite:minesweeper}). (ii) Policies that require multiple PECs, e.g., ``packets to two destinations use the same firewall''. This would be an easy extension, leveraging Plankton's PEC-dependency mechanism, but we have not performed a performance evaluation. (iii) Policies that inspect dynamic behavior, e.g., ``no transient loops prior to convergence'', are out of scope just as they are for all current configuration verification tools.


\vspace{-0.4cm}
\section {Optimizations}
\vspace{-0.1cm}
\label{sec:optimizations}
Although Plankton's RPVP-based control plane greatly reduces the state space, naive model checking is still not efficient enough to scale to large networks. We address this challenge through
optimizations that fall into two major categories --- reducing the search space of the model checker, and making the search more efficient.

\vspace{-0.2cm}
\subsection {Partial Order Reduction}
A well-known optimization technique in explicit-state model checking, Partial Order Reduction (POR) involves exploring a set of events only in one order, if the various orderings will result in the same outcome. In general, precisely answering whether the order of execution affects the outcome can be as hard as model checking itself. Model checkers such as SPIN provide conservative heuristics to achieve some reduction. However, in our experiments, this feature did not yield any reduction. We believe this is because our model of the network has only a single process, and SPIN's heuristics are designed to apply only in a multiprocess environment. Even if we could restructure the model to use SPIN's heuristics, we do not expect significant reductions, as evidenced in past work~\cite{cite:nice}. Instead, we implement POR heuristics, based on our knowledge of the RPVP control plane\footnote{Since we wish to check all converged states of the network, it can be argued that any reduction in search space is essentially POR. But here, we are referring optimizations that have a localized scope.}.

\vspace{-0.3cm}
\subsubsection {Explore consistent executions only}
\label{subsec:consistent}
To describe this optimization, we first introduce the notion of a \textit{consistent} execution:
For a converged state $S$ and a partial execution of RPVP $\pi$ , we say that $\pi$ is consistent with $S$ iff at each step of the execution, a node picks a path that is equal to it its best path in $S$ and never changes it.

Readers may notice that Theorem~\ref{th:osvp} asserts the existence of a consistent execution leading to each converged state of the network, once any failures have happened. This implies that if the model checker was to explore only executions that are consistent with some converged state, completeness of the search would not be compromised (soundness is not violated since every such execution is a valid execution). Of course, when we start the exploration, we cannot know the exact executions that are consistent with some converged state, and hence need to be checked. So, we conservatively assume that every execution we explore is relevant, and if we get evidence to the contrary (like a device having to change a selected best path), we stop exploring that execution.

\vspace{-0.2cm}
\subsubsection {Deterministic nodes}
\label{subsubsec:det}
Even when there is the possibility of non-deterministic convergence, the ``relevant'' non-determinism is typically localized. In other words, after each non-deterministic choice, there is a significant amount of \emph{essentially} deterministic behavior before the opportunity for another non-deterministic choice, if any, arises. (We consider it analogous to endgames in chess, but applicable at any point in the protocol execution.) However, this is obscured by ``irrelevant'' non-determinism -- particularly, ordering between node execution that doesn't impact the converged state. Our goal is to prune the irrelevant non-determinism to reduce the search space for Plankton's model checker.


For an enabled node $n$ in state $S$ with a single best update $u$, we say $n$ is \emph{deterministic} if in all possible converged states reachable from $S$, $n$ will have the path selected after $n$ processes $u$. Of course, with the model checker having only partially executed the protocol, it is highly non-obvious which nodes are deterministic! Nevertheless, suppose for a moment we have a way to identify at least \emph{some} deterministic nodes.  How could we use this information? At each step of RPVP, after finding the set of enabled nodes, if we can identify at least one deterministic enabled node, we choose \emph{one} of these nodes and instruct SPIN to process its update. (More specifically, we pick one arbitrarily.) This avoids the costly non-deterministic branching caused by having to explore the numerous execution paths where \emph{each one} of the enabled nodes is the one executed next. The following theorem shows this is safe.

\begin{theorem}
Any partial execution of RPVP consistent with a converged state can be extended to a complete execution consistent with that state.
\label{th:det}
\end{theorem}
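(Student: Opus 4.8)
The plan is to characterize the state reached after a partial consistent execution $\pi$ and then greedily drive it to convergence along the forwarding structure of $S$. First I would observe that consistency forces a rigid form on the current state: every node that has ever been updated holds exactly its best path in $S$ and (by definition of consistency) will never move again, while every not-yet-updated node still sits at its initial value ($\bot$ for non-origins, $\eps$ for origins). Hence the only ``incorrect'' nodes are non-origins whose $S$-value is a genuine path but which are still at $\bot$. Writing $R$ for the set of nodes already holding their $S$-value, my goal is to extend $\pi$ so that $R$ grows to all of $N$, at which point the state equals $S$ and, since $S$ is converged, $E=\emptyset$ and RPVP halts.

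The key structural input is that $S$ is converged, so no node is invalid in $S$: for every non-origin $n$ we have $\bestpath(\bestpath(n).\operatorname{head}) = \bestpath(n).\operatorname{rest}$ in $S$. Thus the $S$-best-paths form a forest of next-hop pointers rooted at the origins, giving each node a well-defined depth equal to its $S$-path length. I would process the incorrect nodes in order of increasing depth. Given the shallowest incorrect node $n$, with $S$-next-hop $m = \bestpath_S(n).\operatorname{head}$, minimality guarantees $m \in R$ (it has strictly smaller depth), so $m$ currently advertises exactly what it advertises in $S$; by the validity identity this advertisement, after $\importf{n}{m}\circ\exportf{m}{n}$, is precisely $\bestpath_S(n)$. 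Since $\bestpath_S(n)$ beats $n$'s current value $\bot$, node $n$ is enabled and RPVP may select it; and because $S$ is converged no advertisement currently available to $n$ outranks $\bestpath_S(n)$ --- every peer either advertises exactly what it does in $S$ or is still at $\bot$ --- so $m$ lies in the best-update set $U$. Exploiting the non-determinism we are free to resolve, I pick $m \in U$, which sets $\bestpath(n) \gets \bestpath_S(n)$, adding $n$ to $R$ while keeping the execution consistent.

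Iterating, each step moves one node permanently into $R$, so after finitely many steps $R=N$ and we reach $S$, completing the extension. The step I expect to require the most care is verifying that choosing $m$ from $U$ is legitimate and that no previously-correct node is disturbed. For the former I must argue, using only that the ranking is a partial order and that no available advertisement outranks $\bestpath_S(n)$, that $\bestpath_S(n)$ is a maximal (hence ``best'') update so that $m \in U$; a subtlety is that $U$ may also contain peers offering other maximal-but-different paths, which is exactly why the theorem is phrased as an existence claim over executions rather than a determinism claim. For the latter I would carry, as a loop invariant, that $R$ is downward closed under the $S$-next-hop relation (if a node lies in $R$ then so does its $S$-next-hop). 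This holds after $\pi$ because consistency forces a node to be updated only once the peer supplying its $S$-path already holds its own $S$-value, and it is preserved by the increasing-depth processing order; downward closure is what guarantees that every node in $R$ stays valid and is therefore never re-enabled, so the greedy extension indeed terminates in exactly the converged state $S$.
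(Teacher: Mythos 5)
Your proof is correct, and it reaches the result by a somewhat different route than the paper. The paper argues by induction on the length of the longest best path in the converged state $S$: it splits the nodes into those with $S$-paths of length at most $k$ and those of length $k{+}1$, case-splits on whether the partial execution has already touched a deepest-layer node, and in the latter case performs surgery on the execution (deleting the steps of deepest-layer nodes, extending the remainder via the induction hypothesis on the induced subnetwork, then re-appending those steps). You instead characterize the state reached by any consistent partial execution directly --- executed nodes hold exactly their $S$-values permanently, unexecuted nodes hold their initial values --- and then greedily activate the remaining incorrect nodes in increasing order of depth in the next-hop forest of $S$, carrying downward closure of the correct set as an invariant. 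The two arguments rest on the same structural facts (convergence of $S$ makes the next-hop pointers a loop-free forest graded by path length, and convergence guarantees no available advertisement outranks $\bestpath_S(n)$, so the $S$-next-hop lies in the best-update set $U$ and the non-determinism can be resolved in its favor), but your direct characterization of the post-$\pi$ state makes the paper's case split and execution-editing step unnecessary, and your explicit handling of the partial-order ranking (arguing maximality of $\bestpath_S(n)$ rather than strict dominance, and noting why only existence of a suitable choice in $U$ is needed) is more careful than the paper's treatment. The one point worth stating explicitly if you write this up is that correct nodes are never re-enabled under your invariant (their next hops are correct, so they are valid, and no peer can offer a strictly better advertisement since every peer holds either its $S$-value or $\bot$), which is what guarantees that the final state, being equal to $S$, has an empty enabled set and the execution is complete.
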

\begin{proof}
The proof can be found in the Appendix.
\end{proof}

By definition, choosing \textit{any} deterministic node as the single node to execute next produces a new state that remains consistent with all possible converged states reachable from the prior state. Thus, Theorem~\ref{th:det} implies this deterministic choice does not eliminate any converged states from being reachable, preserving completeness. Note that this optimization does not require the entire network to have one converged state; it can apply at every step of the execution, possibly between non-deterministic choices.

What remains is the key: how can we identify deterministic nodes?  Perfect identification is too much to hope for, and we allow our heuristics to return fewer deterministic nodes than actually exist.  We build heuristics that are specific to each routing protocol, prioritizing speed and simplicity above handling atypical cases like circular route redistribution.

For OSPF, our detection algorithm runs a network-wide shortest path computation, and picks each node only after all nodes with shorter paths have executed. We cache this computation so it is only run once for a given topology, set of failures, and set of sources.

For BGP, the detection algorithm performs the following computation:
For each node which is enabled to update its best path, it checks whether there exists a pending update that would never get replaced, because the update would definitely be preferred over other updates that are enabled now or may be in the future.  To check this, we follow the node's BGP decision process, so if the update is tied for most-preferred in one step it moves to the next.  For each step of the decision process, the preference calculation is quite conservative. For local pref, it marks an update as the winner if it matches an import filter that explicitly gives it the highest local pref among all import filters. For AS Path, the path length of the current update must be the minimum possible in the topology. For IGP cost, the update must be from the peer with minimum cost.  If at any node, any update is found to be a clear winner, the node is picked as a deterministic node, and is allowed to process the update. If no node is found that has a clear winner but there is a node that has $\geq 2$ updates tied for the most preferred, then we deterministically pick any one such node and have SPIN non-deterministically choose which of the multiple updates to process. Figure~\ref{fig:det} illustrates these scenarios on a BGP network, highlighting one sequence of node selections (out of many possible).

\begin{figure}
\begin{center}
\includegraphics[width=0.7\columnwidth, keepaspectratio]{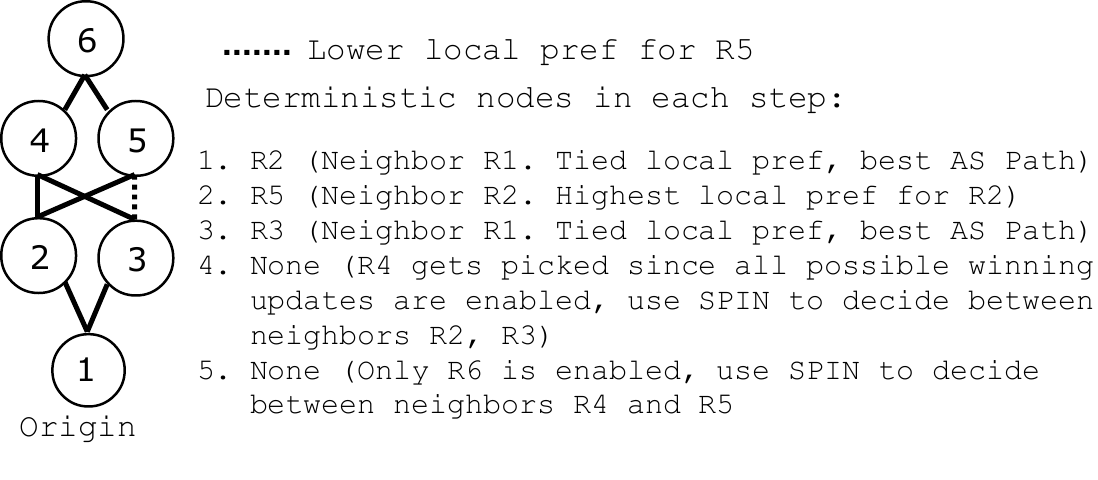}
\end{center}
\vspace{-0.8cm}
\caption{Step-by-step choice of deterministic nodes (Each node has a different AS number). }
\vspace{-0.2cm}
\label{fig:det}
\end{figure}

The detection algorithm may fail to detect some deterministic nodes. For instance, suppose node $N$ is deterministic but its import filter from neighbor $M$ sets the highest local pref for updates with a particular community attribute, and $M$ can never assign that attribute. Then the detection algorithm will fail to mark $N$ as deterministic.  But successfully identifying \emph{at least one} deterministic node in a step will avoid non-deterministic branching at that step. As long as this happens frequently, the optimization will be helpful.

Even if the decision on a node is ambiguous in a particular state, the system will often make progress to a state where ambiguities can be resolved.  In the example above, once $M$ selects a path (and therefore will never change its path as described in \S~\ref{subsec:consistent}), the detection algorithm no longer needs to account for a possible more-preferred update from it, and may then be able to conclude that $N$ is deterministic.

\vspace{-0.3cm}
\subsubsection {Decision independence}

If node $A$'s actions are independent of any future decisions of node $B$ and vice versa, then the execution ordering between $A$ and $B$ does not matter.  We check a sufficient condition for independence: any route advertisements passed between these nodes, in either direction, must pass through a node that has already made its best path decision (and therefore will not transmit any further updates).  In this case, we pick a single arbitrary execution order between $A$ and $B$.

\vspace{-0.3cm}
\subsubsection {Failure ordering}
\label{subsubsec:failure-ordering}
As stated in \S~\ref{subsec:rpvp}, the model checker performs all topology changes before the protocol starts execution. We also enforce a strict ordering of link failures, reducing branching even further.

\vspace{-0.3cm}
\subsection {\vspace{-0.2cm}Policy-based Pruning}
\label{subsec:policy-opt}
Policy-based pruning limits protocol execution to those parts of the network that are relevant to the satisfaction/failure of the policy. When a policy defines a set of source nodes (\S~\ref{subsec:policies}), it indicates that the policy can be checked by analyzing the forwarding from those nodes only. The best example for this is reachability, which is checked from a particular set of starting nodes.
When an execution reaches a state where all source nodes have made their best-path decision, Plankton considers the execution, which is assumed to be consistent, to have finished.
In the cases where only a single prefix is defined in a PEC, Plankton performs a more aggressive pruning, based on the influence relation. Any device that cannot influence a source node is not allowed to execute. With some additional bookkeeping, the optimization can be extended to cases where multiple prefixes contribute to a PEC, but our current implementation does not support this. The optimization is also not sound when applied to PECs on which other PECs depend. A router that does not speak BGP may not directly influence a source node, but it may influence the routing for the router IP addresses, which in turn may affect the chosen path of the source node. So, the optimization is not applied in such cases.
\vspace{-0.45cm}
\subsection {Choice of Failures}
\label{subsec:failures}
In addition to the total ordering of failures described in \S~\ref{subsubsec:failure-ordering}, Plankton also attempts to reduce the number of failures that are explored, using equivalence partitioning of devices as proposed by Bonsai~\cite{cite:cp-compression}. Bonsai groups devices in the network into abstract nodes, creating a smaller topology overall for verification. Plankton computes Device Equivalence Classes (DECs) similarly, and defines a Link Equivalence Class (LEC) as the set of links between two DECs.  For each link failure, Plankton then explores only one representative from each LEC. When exploring multiple failures, we refine the DECs and LECs after each selection. Note that this optimization limits the choice of failed links, but the verification happens on the original input network.  In order to avoid remapping interesting nodes (\S~\ref{subsec:policies}), they are each assigned to a separate DEC. Since the computed DECs can be different for each PEC, this optimization is done only when there are no cross-PEC dependencies.

\vspace{-0.4cm}
\subsection{State Hashing}
During the exhaustive exploration of the state space, the explicit state model checker needs to track a large number of states simultaneously. A single network state consists of a copy of all the protocol-specific state variables at all the devices. Maintaining millions of copies of these variables is naturally expensive, and in fact, unnecessary. A routing decision at one device doesn't immediately affect the variables at any of the other devices. Plankton leverages this property to the reduce memory footprint, storing routing table entries as 64-bit pointers to the actual entry, with each entry stored once and indexed in a hash table. We believe this optimization can be applied to other variables in the network state too, as long as they are not updated frequently. Picking the right variables to optimize this way, and developing more advanced hash-based schemes, can be explored in the future.
\vspace{-0.5cm}
\section{Evaluation}
\vspace{-0.3cm}
\label{sec:eval}

We prototyped Plankton including the equivalence class computation, control plane model, policy API and optimizations in 373 lines of Promela and 4870 lines of C++, excluding the SPIN model checker. We experimented with our prototype on Ubuntu 16.04 running on a 3.4 GHz Intel Xeon processor with 32 hardware threads and 188 GB RAM.


We begin our evaluation with simple hand-created topologies incorporating protocol characteristics such as shortest path routing, non-deterministic protocol convergence, redistribution, recursive routing, etc. Among these tests, we incorporated examples of non-deterministic protocol execution from~\cite{cite:stablePaths}, as well as BGP wedgies, which are policy violations which can occur only under some non-deterministic execution paths. In each of these cases, Plankton correctly finds any violations that are present.

\begin{figure*}[ht!]
    \centering
    \begin{subfigure}[b]{0.45\linewidth}
        \centering
        \hskip-1em
        \includegraphics[width=1.02\columnwidth, keepaspectratio]{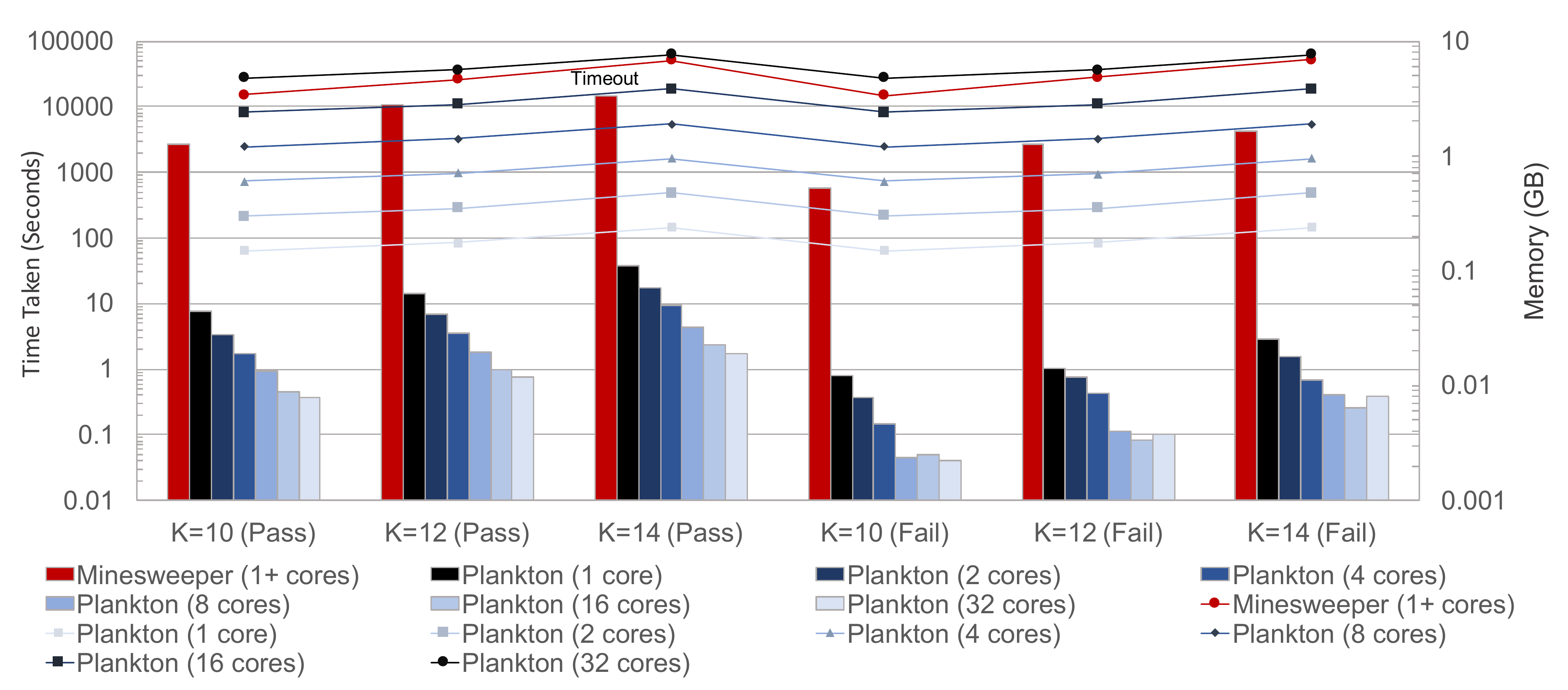}
        \vspace{-0.2cm}
        \caption{Fat trees with OSPF, loop policy, multi-core}
    \end{subfigure}
    \hskip0.1em
    \begin{subfigure}[b]{0.45\linewidth}
        \centering
        \includegraphics[width=1.02\columnwidth, keepaspectratio]{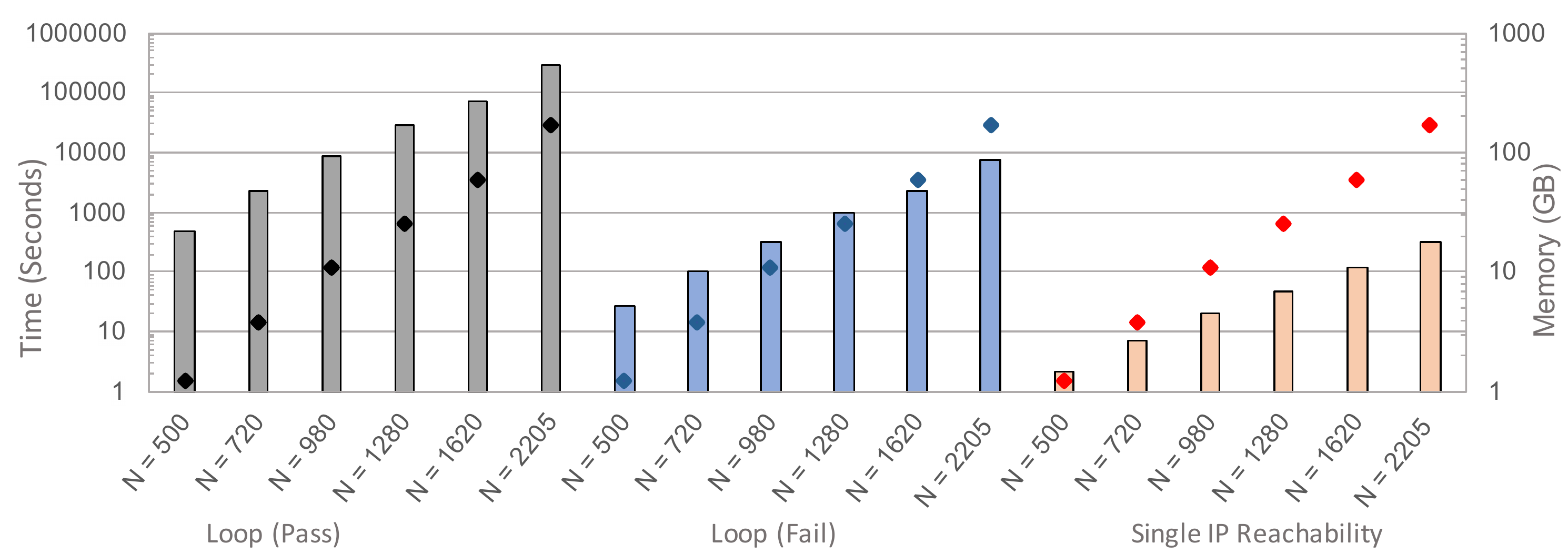}
        \caption{Fat trees with OSPF, multiple policies, 1 core}
    \end{subfigure}

    \begin{subfigure}[b]{0.45\linewidth}
    \centering
    \includegraphics[width=0.7\columnwidth, keepaspectratio]{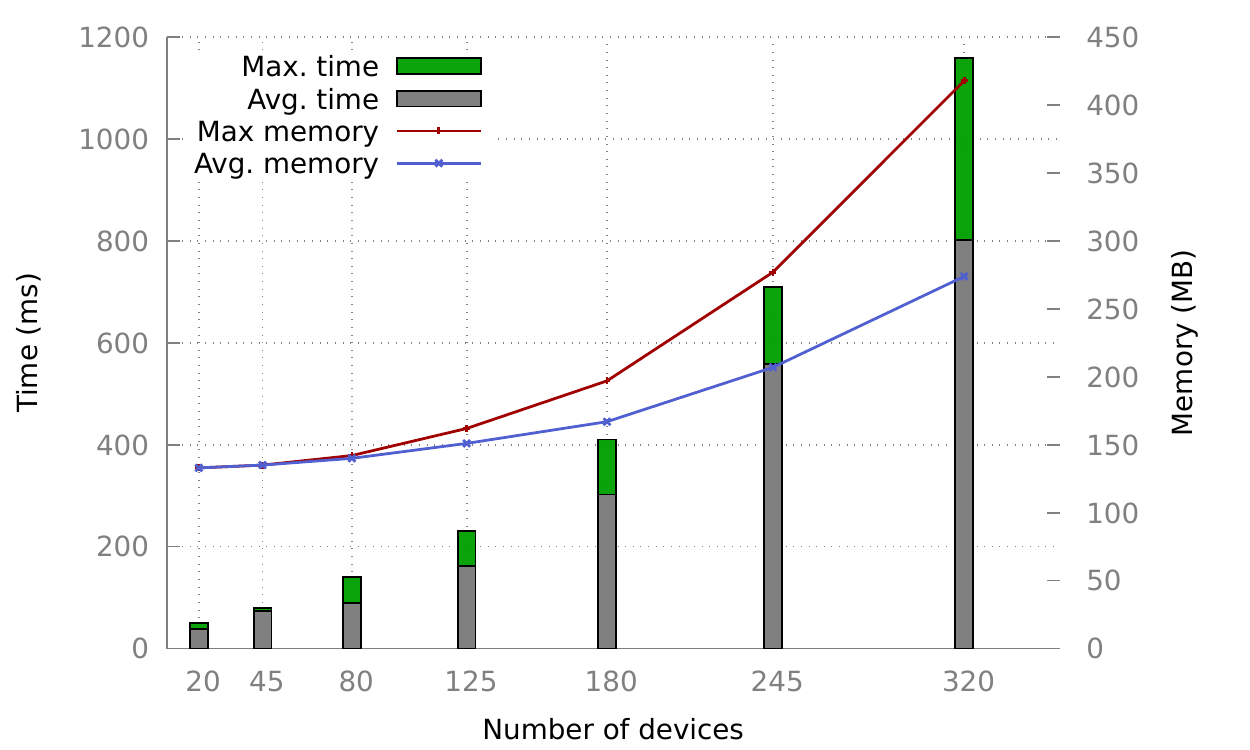}
    \caption{Fat trees with BGP, waypoint policy, 1 core}
    \hskip1em
    \end{subfigure}
    \begin{subfigure}[b]{0.45\linewidth}
        \centering
        \includegraphics[width=1.02\columnwidth, keepaspectratio]{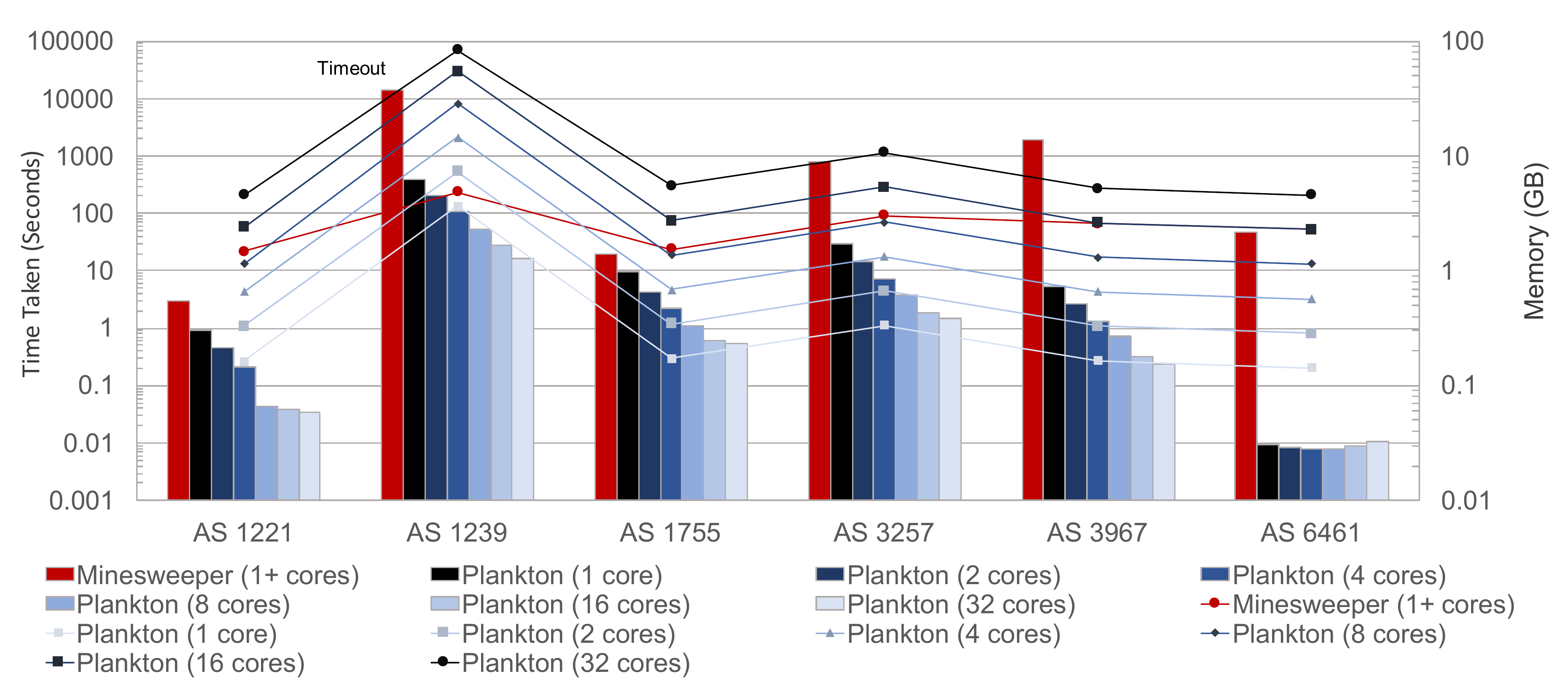}
        \caption{AS topologies with OSPF and failures, reachability policy, multi-core}
        \label{rocketfuel-fail}
    \end{subfigure}

    \begin{subfigure}[b]{0.45\linewidth}
        \centering
        \vskip0.3em
        \includegraphics[width=1.02\columnwidth, keepaspectratio]{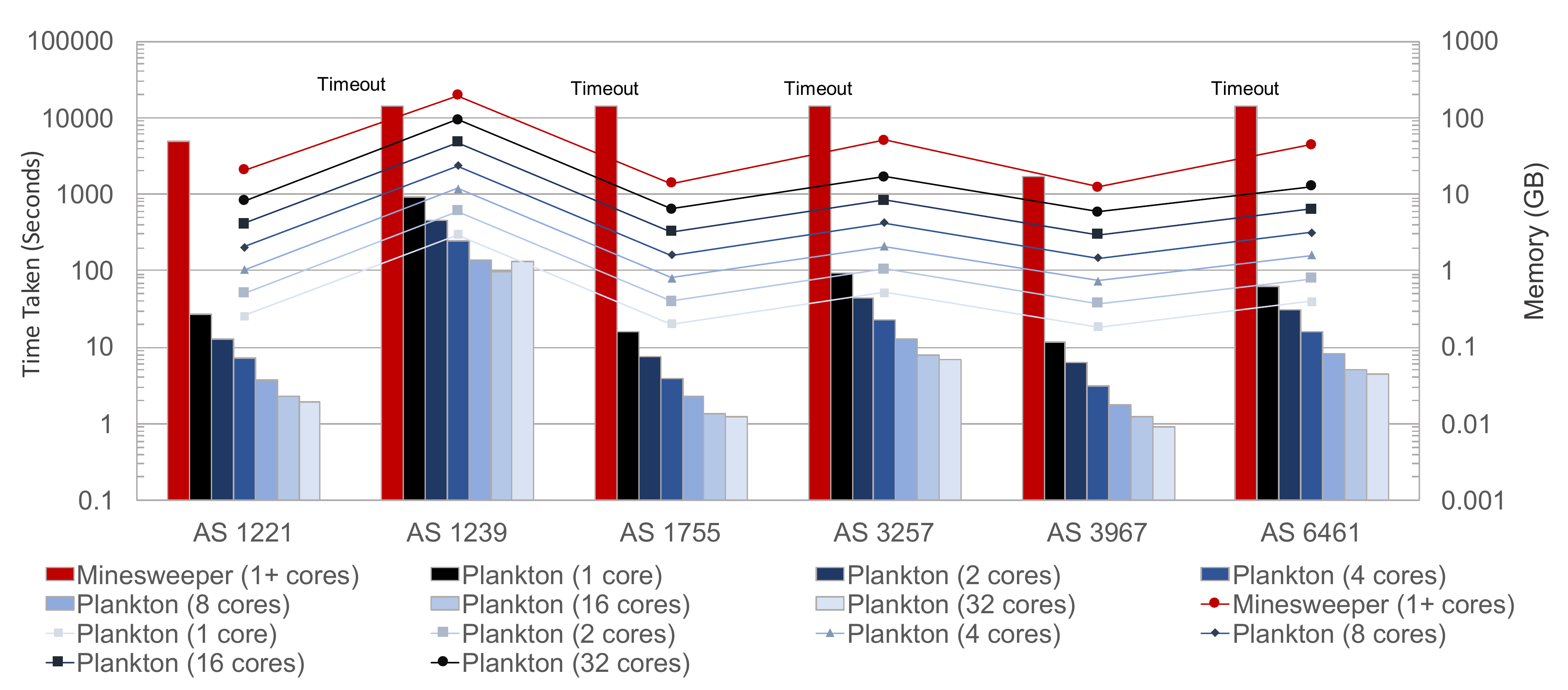}
        \caption{AS topologies with iBGP over OSPF, reachability policy, multi-core}
    \end{subfigure}
    \hskip1em
    \begin{subfigure}[b]{0.45\linewidth}
        \centering
        \vspace{-2em}
        \includegraphics[width=1.05\columnwidth]{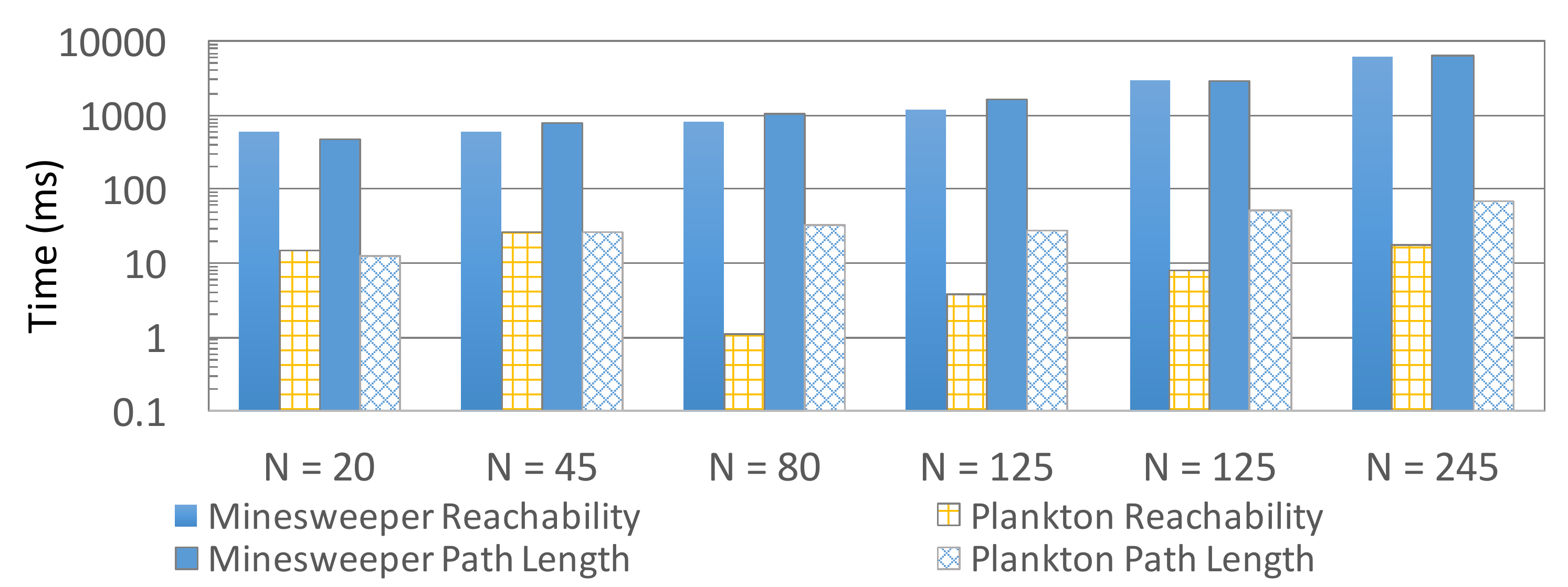}
        \caption{Bonsai-compressed fat trees with OSPF, multiple policies, 8 cores}
    \end{subfigure}
    
    \begin{subfigure}{0.3\linewidth}
        \vspace{0.2cm}
        \centering
       \scalebox{0.6}{
        \begin{tabular}{ l | p{0.2\columnwidth} |  l | l }
          \hline
          {\bf Network} & {\bf Links Failed} & {\bf ARC} & {\bf Plankton} \\
          \hline
          \hline
          \multirow{3}{*}{\small Fat tree (20 nodes) } & {\small 0} & {\small \hfill 1.08 s} & {\small\hfill 0.19 s} \\ 
          & {\small $\leq$ 1} & {\small \hfill 1.05 s} & {\small\hfill 0.23 s} \\  
          & {\small $\leq$ 2} & {\small \hfill 1.00 s} & {\small\hfill 0.31 s} \\ 
          \hline
          \multirow{3}{*}{\small Fat tree (45 nodes)} & {\small 0} & {\small \hfill 12.17 s} & {\small\hfill 0.49 s} \\ 
          & {\small $\leq$ 1} & {\small \hfill 12.40 s} & {\small\hfill 0.55 s} \\
          & {\small $\leq$ 2} & {\small \hfill 12.49 s} & {\small\hfill 2.09 s} \\
          \hline
          \multirow{3}{*}{\small Fat tree (80 nodes) } & {\small  0} &  {\small \hfill 300.50 s} & {\small\hfill 0.93 s} \\ 
          & {\small $\leq$ 1} &  {\small \hfill 280.00 s} & {\small\hfill 1.98 s} \\
          & {\small $\leq$ 2} &  {\small \hfill 294.12 s} & {\small\hfill 18.30 s} \\
          \hline
          \multirow{3}{*}{\small Fat tree (125 nodes)} & {\small  0} &  {\small \hfill 4847.57 s} & {\small\hfill 1.90 s} \\
          & {\small $\leq$ 1} &  {\small \hfill 5096.96 s} & {\small\hfill 9.34 s} \\
          & {\small $\leq$ 2} &  {\small \hfill 4955.95 s} & {\small\hfill 159.37 s} \\
          \hline
          \multirow{3}{*}{\small AS 1221 (108 nodes)} & {\small  0} &  {\small \hfill 41.62 s} & {\small\hfill 1.12 s} \\
          & {\small $\leq$ 1} &  {\small \hfill 40.50 s} & {\small\hfill 2.80 s} \\
          & {\small $\leq$ 2} &  {\small \hfill 38.83 s} & {\small\hfill 106.57 s} \\
          \hline
          \multirow{3}{*}{\small AS 1775 (87 nodes)} & {\small  0} &  {\small \hfill 49.32 s} & {\small\hfill 1.02 s} \\
          & {\small $\leq$ 1} &  {\small \hfill 48.65 s} & {\small\hfill 2.73 s} \\
          & {\small $\leq$ 2} &  {\small \hfill 46.52 s} & {\small\hfill 155.28 s} \\
          \hline
        \end{tabular}
        }
        \caption{Networks with link failures, all-to-all reachability policy, 8 cores}
    \end{subfigure}
    \hskip0.1em
    \begin{subfigure}{0.35\linewidth}
    \centering
    \vskip-2em
    \includegraphics[width=\columnwidth, keepaspectratio]{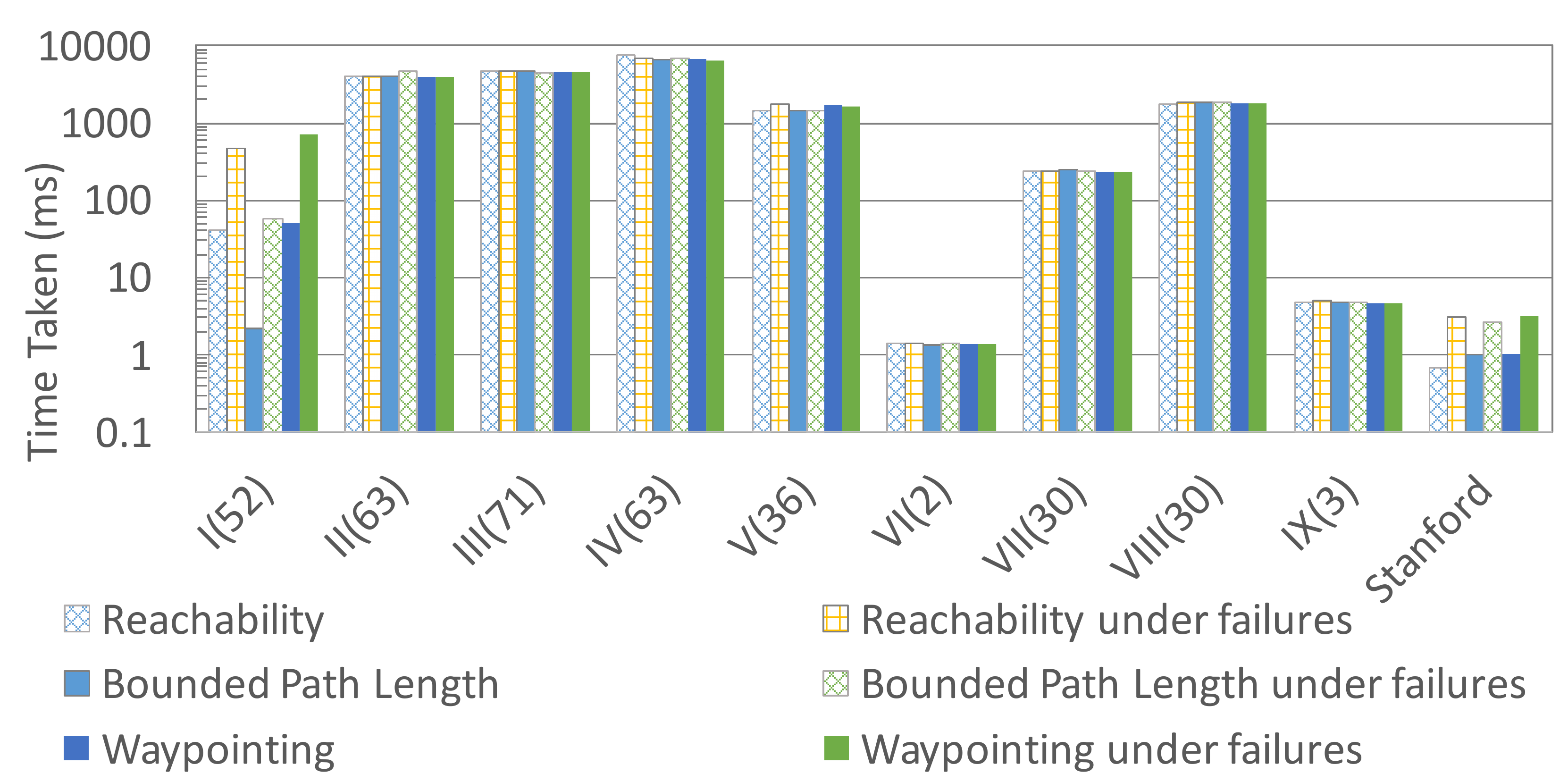}
    \vskip-0.4em
    \caption{Real-world configs (number of devices in parentheses), multiple policies, 1 core}
    \vskip-0.4em
    \end{subfigure}
    \hskip0.5em
    \begin{subfigure}{0.25\linewidth}
       \resizebox{\linewidth}{!}{
        \begin{tabular}{ l | l | p{0.2\columnwidth} | l | l }
          \hline
          {\bf Network} & {\bf Policy} & {\bf Links Failed} & {\bf Memory} & {\bf Time} \\
          \hline
          \hline
          \multirow{6}{*}{\small II} & \multirow{2}{*}{\small \hfill Loop} & {\hfill 0} & {\small \hfill 2.37 GB} & {\small\hfill 8.79 s} \\
          & &{\small \hfill $\leq 1$} & {\small \hfill 2.47 GB} & {\small\hfill 13.62 s} \\
          \cline{2-5}
          &\multirow{2}{*}{\small Multipath Consistency} & {\small \hfill 0} & {\small \hfill 2.37 GB} & {\small\hfill 16.28 s} \\
          && {\small \hfill $\leq 1$} & {\small \hfill 2.37 GB} & {\small\hfill 22.01 s} \\
          \cline{2-5}
          &\multirow{2}{*}{\small Path Consistency} & {\small \hfill 0} & {\small \hfill 2.37 GB} & {\small\hfill 15.43 s} \\
          && {\small \hfill $\leq 1$} & {\small \hfill 2.37 GB} & {\small\hfill 23.55 s} \\
          \hline
          \multirow{6}{*}{\small III} & \multirow{2}{*}{\small \hfill Loop} & {\hfill 0} & {\small \hfill 2.36 GB} & {\small\hfill 11.49 s} \\
          & &{\small \hfill $\leq 1$} & {\small \hfill 2.88 GB} & {\small\hfill 29.81 s} \\
          \cline{2-5}
          &\multirow{2}{*}{\small Multipath Consistency} & {\small \hfill 0} & {\small \hfill 2.37 GB} & {\small\hfill 16.33 s} \\
          && {\small \hfill $\leq 1$} & {\small \hfill 2.67 GB} & {\small\hfill 24.86 s} \\
          \cline{2-5}
          &\multirow{2}{*}{\small Path Consistency} & {\small \hfill 0} & {\small \hfill 2.36 GB} & {\small\hfill 15.53 s} \\
          && {\small \hfill $\leq 1$} & {\small \hfill 2.88 GB} & {\small\hfill 21.01 s} \\
          \hline
          \multirow{6}{*}{\small IV} & \multirow{2}{*}{\small \hfill Loop} & {\hfill 0} & {\small \hfill 2.31 GB} & {\small\hfill 12.37 s} \\
          & &{\small \hfill $\leq 1$} & {\small \hfill 2.40 GB} & {\small\hfill 13.14 s} \\
          \cline{2-5}
          &\multirow{2}{*}{\small Multipath Consistency} & {\small \hfill 0} & {\small \hfill 2.34 GB} & {\small\hfill 16.36 s} \\
          && {\small \hfill $\leq 1$} & {\small \hfill 2.37 GB} & {\small\hfill 17.04 s} \\
          \cline{2-5}
          &\multirow{2}{*}{\small Path Consistency} & {\small \hfill 0} & {\small \hfill 2.33 GB} & {\small\hfill 16.33 s} \\
          && {\small \hfill $\leq 1$} & {\small \hfill 2.37 GB} & {\small\hfill 17.00 s} \\
          \hline
        \end{tabular}
        }
        \vskip-0.2em
        \caption{Real-world configs, multiple policies, 32 cores}
    \end{subfigure}
    \vspace{-0.3cm}
    \caption{Plankton experiments. Bars and lines/points denote time and memory consumption, respectively.}
    \label{fig:experiments}
    \vspace{-0.5cm}
\end{figure*}

Having tested basic correctness, we next evaluate performance and compare to past verifiers. What sets Plankton apart from tools other than Minesweeper is its higher data plane coverage and ability to handle multiple protocols (Figure~\ref{stateoftheart}). We therefore compare primarily with Minesweeper but also include ARC in some tests.


We also evaluated Bonsai~\cite{cite:cp-compression}, a preprocessing technique that helps improve the scalability of configuration verification, for specific policies. Bonsai could assist any configuration verifier. We integrated Bonsai with Plankton, and experimentally compare the performance of Bonsai+Minesweeper and Bonsai+Plankton. However, it is important to study the performance of these tools without Bonsai too: Bonsai's network compression cannot be applied if the correctness is to be evaluated under link failures, or if the policy being evaluated is not preserved by Bonsai.

\vskip0.3em
{\noindent \bf Experiments with synthetic configurations}\\
Our first set of performance tests uses fat trees. We construct fat trees of increasing sizes, with each edge switch originating a prefix into OSPF. Link weights are identical. We check these networks for routing loops. In order to cause violations, we install static routes at the core routers. In our first set of experiments, the static routes match the routes that OSPF would eventually compute, so there are no loops. Then, we change the static routes such that some of the traffic falls into a routing loop. Figure~\ref{fig:experiments}(a) illustrates the time and memory consumed, using Plankton running on various numbers of cores, and using Minesweeper. We observed that under default settings, Minesweeper's CPU utilization keeps changing, ranging from $100\%$ to $1,600\%$. In this experiment and all others where we run both Minesweeper and Plankton, the two tools produced the same policy verification results. This serves as an additional correctness check for Plankton. Bonsai is not used, because its currently available implementation does not appear to support loop policies.

As the results show, Plankton scales well with input network size. The speed and memory consumption varies as expected with the degree of parallelism. Even on a single core, Plankton is quicker than Minesweeper for all topologies. For larger networks, Plankton is several orders of magnitude quicker. On the memory front, even on 16 cores, Plankton's footprint is smaller than Minesweeper's.

Encouraged by the good performance numbers, we scale up to very large fat trees (Figure~\ref{fig:experiments}(b)). Here, Minesweeper doesn't finish even in 4 hours, even with a 500-device network (in the case of passing loop check, even in a 245-device network). So, we did not run Minesweeper on the larger networks. We run Plankton with a single CPU core only, to illustrate its time-memory tradeoff: since the analyses of individual PECs are fully independent and of identical computational effort, running with $n$ cores would reduce the time by $n\times$, and increase memory by $n\times$. For example, in the 2,205-device network, Plankton uses about 170 GB per process. Policies that check a single equivalence class are much cheaper: for example, single-IP reachability finishes in seconds or minutes even on the largest networks (Figure~\ref{fig:experiments}(b)).


Next, we test Plankton with a very high degree of non-determinism. We evaluated a data center setting with BGP, which is often employed to provide layer 3 routing down to the rack level in modern data centers~\cite{cite:bgprfc}. We configure BGP as described in RFC 7938~\cite{cite:bgprfc} on fat trees of various sizes. Furthermore, we suppose that the network operator intends traffic to pass through any of a set of acceptable \emph{waypoints} on the aggregation layer switches (e.g., imagine the switches implement a certain monitoring function). We pick a random subset of aggregation switches as the waypoints in each experiment. However, we create a ``misconfiguration'' that prevents multipath and fails to steer routes through the waypoints\footnote{This setup is convenient for practical reasons, as our current Plankton prototype implementation does not support BGP multipath.}. Thus, in this scenario, whether the selected path passes through a waypoint depends on the order in which updates are received at various nodes, due to age-based tie breaking~\cite{cite:oldestpath}. We check waypoint policies which state that the path between two edge switches should pass through one of the waypoints. Plankton evaluates various non-deterministic convergence paths in the network, and determines a violating sequence of events. Time and memory both depend somewhat on the chosen set of aggregation switches, but even the worst-case times are less than 2 seconds (Figure~\ref{fig:experiments}(c)). We consider this a success of our policy-based pruning optimization: the network has too many converged states to be verified in reasonable time, but many have equivalent results in terms of the policy.

\vskip0.3em
{\noindent \bf Experiments with semi-synthetic configurations}\\
We use real-world AS topologies and associated OSPF link weights obtained from RocketFuel~\cite{cite:rocketfuel}. We pick a random ingress point that has more than one link incident on it. We verify that with any single link failure, all destination prefixes are reachable from that ingress. Here, Minesweeper's SMT-based search algorithm could be beneficial, due to the large search space created by failures. Nevertheless, Plankton performs consistently better in both time and memory (Figure~\ref{fig:experiments}(d)). Both tools find a violation in each case. The time taken by Plankton with 16 and 32 cores are often identical, since a violation is found in the first set of PECs. Note that in this experiment and the next, we did not use Bonsai, because (i) it cannot be used for checks involving link failures, and (ii) the topology has hardly any symmetry that Bonsai could exploit.

To evaluate our handling of PEC dependencies, we configure iBGP over OSPF on the AS topologies. The iBGP prefixes rely on the underlying OSPF routing process to reach the next hop. We check that packets destined to the iBGP-announced prefixes are correctly delivered. It is worth noting that this test evaluates a feature that, to the best of our knowledge, is provided only by Plankton and Minesweeper. Thanks to the dependency-aware scheduler, Plankton performs multiple orders of magnitude better  (Figure~\ref{fig:experiments}(e)). This is unsurprising: Minesweeper's approach of duplicating the network forces it to solve a \emph{much} harder problem here, sometimes over $300\times$ larger.

\vskip0.3em
{\noindent \bf Integration with Bonsai}\\
We integrated Plankton with Bonsai to take advantage of control plane compression when permitted by the specific verification task at hand. We test this integration experimentally by checking Bounded Path Length and Reachability policies on fat trees running OSPF. The symmetric nature of fat trees is key for Bonsai to have a significant impact. We measure the time taken by Plankton and Minesweeper, {\em after} Bonsai preprocesses the network. Plankton still outperforms Minesweeper by multiple orders of magnitude (Figure~\ref{fig:experiments}(f)).

\vskip0.3em
{\noindent \bf Comparison with ARC}\\
Having evaluated Plankton's performance in comparison with Minesweeper, we move on to comparing the performance of Plankton and ARC. ARC is specifically designed to check shortest-path routing under failures, so we expected the performance to be much better than the more general-purpose Plankton, when checking networks compatible with ARC. We check all-to-all reachability in fat trees and AS topologies running OSPF, under a maximum of 0, 1 and 2 link failures. Similar to Minesweeper, ARC's CPU utilization ranges from $100\%$ to $600\%$ under default settings. We allocate 8 cores to Plankton. Plankton is multiple orders of magnitude faster in most cases (Figure~\ref{fig:experiments}(g)).\footnote{Our numbers for ARC are similar to those reported by its authors for similar sized networks, so we believe we have not misconfigured ARC.} This is genuinely surprising; one reason that {\em may} explain the observation is that ARC always computes separate models for each source-destination pair, whereas Plankton computes them based only the destination, when verifying destination address routing. Nevertheless, we do not believe that there is a fundamental limitation in ARC's design that would prevent it from outperforming Plankton on the networks that can be checked by either tool. Interestingly, while ARC's resiliency-focused algorithm doesn't scale as easily as Plankton for larger networks, its performance actually sometimes slightly {\em improves} when the number of failures to be checked increases. Plankton on the other hand scales poorly when checking increasing levels of resiliency. We do not find this concerning, since most interesting checks in the real world involve only a small number of failures. When we performed these experiments with Minesweeper, no check involving 2 failures ran to completion except the smallest fat tree.

\vskip0.3em
{\noindent \bf Testing with real configurations}\\
We used Plankton to verify 10 different real-world configurations from 3 different organizations, including the publicly available Stanford dataset. We first check reachability, waypointing and bounded path length policies on these networks, with and without failures. All except one of these networks use some form of recursive routing, such as indirect static routes or iBGP. We feel that this highlights the significance of Plankton's and Minesweeper's support for such configurations. Moreover, the PEC dependency graph for these networks did not have any strongly connected components larger than a single PEC, which matches yet another of our expectations. Interestingly, we did find that the PEC dependency graph had {\em self loops}, with a static route pointing to a next hop IP within the prefix being matched. It is also noteworthy that in these experiments, the only non-determinism was in the choice of links that failed, which substantiates our argument that network configurations in the real world are largely deterministic. Figure~\ref{fig:experiments}(h) illustrates the results, which indicate that Plankton can handle the complexity of real-world configuration verification.

In our next experiment with real world configs, we identify three networks where Loop, Multipath Consistency and Path Consistency policies are meaningful and non-trivial to check. We check these policies with and without link failures. Figure~\ref{fig:experiments}(i) illustrates the results of this experiment. The results indicate that the breadth of Plankton's policies scale well on real world networks. The Batfish parser, which is used by Minesweeper, was incompatible with the configurations, so we could not check these configs on Minesweeper (checking the Stanford dataset failed {\em after} parsing). However, the numbers we observe are significantly better than those reported for Minesweeper on similar-sized networks, for similar policies.

\vskip0.4em
{\noindent \bf Optimization Cost/Effectiveness}\\
To determine the effectiveness of Plankton's optimizations, we perform experiments with some optimizations disabled or limited. Figure~\ref{fig:no-opt} illustrates the results from these experiments. When all optimizations are turned off, naive model checking fails to scale beyond the most trivial of networks. The optimizations reduce the state space by $4.95\times$ in smaller networks and by as much $24$,$968\times$ in larger ones.

\begin{figure}
\begin{center}
\resizebox{1.0\linewidth}{!}{
\begin{tabular}{ p{0.5\columnwidth} | p{0.45\columnwidth}| p{0.15\columnwidth} | p{0.22\columnwidth} }
  \hline
  {\bf Experiment} & {\bf Optimizations} & {\bf Time} & {\bf Memory} \\
  \hline
  \hline

\multirow{2}{*}
{\small Ring, OSPF, 4 nodes, 1 failure}
& {\small All} & {\small \hfill 343 $\mu$s} & {\small \hfill 137.43 MB}\\
& {\small None} & {\small \hfill 1.56 ms} & {\small \hfill 137.39 MB}\\
\hline

\multirow{2}{*}{\small Ring, OSPF, 8 nodes, 1 Failure}
& {\small All} & {\small \hfill 623 $\mu$s} & {\small \hfill 143.22 MB}\\
& {\small None} & {\small \hfill 0.13 s} & {\small \hfill 137.04 MB}\\
\hline

\multirow{2}{*}{\small Ring, OSPF, 16 nodes, 1 Failure} 
& {\small All} & {\small \hfill 2.44 ms} & {\small \hfill 137.89 MB}\\
& {\small None} & {\small \hfill 266.48 s} & {\small \hfill 7615.57 MB}\\
\hline

\multirow{2}{*}{\small Fat tree, OSPF, 20 nodes} 
& {\small All} & {\small \hfill 464 $\mu$s} & {\small \hfill 551.73 MB}\\
& {\small None} & {\small \hfill > 5 min} & {\small \hfill > 8983.55 MB}\\
\hline

\multirow{2}{*}{\small  Fat tree, OSPF, 245 nodes} & {\small All} &{\small \hfill 4.297 s} & {\small \hfill 1908 MB}\\
 & {\small All but link failure opt.} &{\small \hfill 64.97 s} & {\small \hfill 72862 MB}\\
  \hline
  \multirow{2}{*}{\small AS 1221 iBGP} & {\small All} & {\small  \hfill 27.54 s} & {\small  \hfill 254.22 MB}\\
  & {\small All but deterministic node opt.} & {\small  \hfill 25.43 s} & {\small  \hfill 254.34 MB}\\
  \hline
  \multirow{3}{*}{\small Fat tree, BGP, 20 nodes} & {\small All} & {\small  \hfill 46 ms} & {\small  \hfill 137 MB}\\
  & {\small All but deterministic node opt.} & {\small \hfill > 5 min} & {\small  \hfill > 6144 MB}\\
  & {\small All but policy-based pruning} & {\small \hfill > 5 min} & {\small  \hfill > 6144 MB}\\

  \hline
\end{tabular}
}
\end{center}
\vspace{-0.6cm}
\caption{Experiments with optimizations disabled/limited}
\label{fig:no-opt}
\vspace{-0.4cm}
\end{figure}

To evaluate device-equivalence based optimizations in picking failed links, we perform loop check on fat trees running OSPF
under single link failure with the optimization turned off.
We observed a $15\times$ reduction in speed, and a $38\times$ increase in memory overhead, indicating the effectiveness of the optimization in networks with high symmetry.

In the next set of experiments, we measure the impact of our partial order reduction technique of prioritizing deterministic nodes (\S~\ref{subsubsec:det}). We first try the iBGP reachability experiment with the AS 1221 topology, with the detection of deterministic nodes in BGP disabled. We notice that in this case the decision independence partial order reduction produces reductions identical to the disabled optimization, keeping the overall performance unaffected. In fact, the the time improves by a small percentage, since there is no detection algorithm that runs at every step. We see similar results when we disable the optimization on the edge switches in our BGP data center example. However, this does not mean that the deterministic node detection can be discarded --- in the BGP data center example, when the optimization is disabled altogether, the performance falls dramatically. The next optimization that we study is policy-based pruning. On the BGP data center example, we attempt to check a waypoint policy, with policy-based optimizations turned off. The check times out, since it is forced to generate every converged data plane, not just the ones relevant to the policy.

\begin{figure}
\begin{center}
\scalebox{0.6}{
\begin{tabular}{ p{0.7\columnwidth} | p{0.4\columnwidth} | p{0.3\columnwidth} }
  \hline
  {\bf Experiment} & {\bf No Bitstate Hashing} & {\bf Bitstate Hashing} \\
  \hline
  \hline
  {\small 180 Node BGP DC Waypoint (Worst Case)} & { \hfill 202 MB} & {\hfill 67 MB} \\
  {\small 320 Node BGP DC Waypoint (Worst Case)} & { \hfill 428 MB} & {\hfill 215 MB} \\
  {\small AS 1239 Fault Tolerance (2 cores)} & { \hfill 7.33 GB} & { \hfill 4.52 GB} \\
  {\small AS 1221 Fault Tolerance (1 core)} & { \hfill 163.53 MB} & {\hfill 60 MB} \\
  \hline
\end{tabular}
}
\end{center}
\vspace{-0.4cm}
\caption{The effect of bitstate hashing on memory usage}
\label{fig:bitstate}
\vspace{-0.6cm}
\end{figure}

SPIN provides a built-in optimization called
\emph{bitstate hashing} that uses a Bloom filter to keep track
of explored states, rather than storing them explicitly. This can cause some false negatives due to reduced coverage of execution paths. We find that bit state hashing provides significant reduction in memory in a variety of our test cases (Figure~\ref{fig:bitstate}). According to SPIN's statistics our coverage would be over $99.9\%$. Nevertheless, we have not turned on bitstate hashing in our other experiments in favor of full correctness.
\vspace{-0.4cm}

\section{Limitations}
\label{sec:limitations}
\vspace{-0.1cm}
Some of the limitations of Plankton, such as the lack of support for BGP multipath and limited support for route aggregation, have been mentioned in previous sections. As discussed in \S~\ref{subsec:scheduling}, Plankton may also produce false positives when checking networks with cross-PEC dependencies, because it expects that every converged state of a PEC may co-exist with every converged state of other PECs that depend on it. However, such false positives are unlikely to happen in practice, since real-world cases of cross-PEC dependencies (such as iBGP) usually involve only a single converged state for the recursive PECs. Our current implementation of Plankton assumes full visibility of the system to be verified, and that any dynamics will originate from inside the system. So, influences such as external advertisements need to be modeled using stubs that denote entities which originate them. Plankton's technique is not suited for detecting issues in vendor-specific protocol implementations, a limitation that all existing formal configuration analysis tools share. As with most formal verification tools, one needs to assume that Plankton itself is correct, both in terms of the theoretical foundations as well as the implementation. Correct-by-construction program synthesis could help in this regard.

\vspace{-0.4cm}
\section{Related Work}
\vspace{-0.3cm}
\label{sec:related}

{\bf Data plane verification:} The earlier offline network verification techniques~\cite{cite:anteater, cite:hsa} have evolved into more efficient and real-time ones (e.g.,~\cite{cite:veriflow, cite:netplumber, cite:deltanet, cite:apv, cite:ddnf, cite:pec}), including richer data plane models (e.g.,~\cite{cite:vmn, cite:p4k}). These techniques however, cannot verify configurations prior to deployment.


{\bf Configuration verification:} We discussed the state of the art of configuration verification in \S~\ref{sec:motivation}, and how Plankton improves upon the various tools in existence. CrystalNet~\cite{cite:crystalnet} emulates actual device VMs, and its results could be fed to a data plane verifier. However, this would not verify non-deterministic control plane dynamics. Simultaneously improving the fidelity of configuration verifiers in \emph{both} dimensions (capturing dynamics as in Plankton and implementation-specific behavior as in CrystalNet) appears to be a difficult open problem.


{\bf Optimizing network verification:} Libra~\cite{cite:libra} is a divide-and-conquer data plane verifier, which is related to our equivalence class-based partitioning of possible packets. The use of symmetry to scale verification has been studied in the data plane~\cite{cite:symmetry} and control plane (Bonsai~\cite{cite:cp-compression}). We have discussed how Plankton uses ideas similar to Bonsai, as well as integrates with Bonsai itself.

{\bf Model checking in the networking domain:} Past approaches that used model checking in the networking domain have focused almost exclusively on the network software itself,
either as SDN controllers, or protocol implementations~\cite{cite:nice, cite:modelcheckingsdn, cite:protocolimplementation}. Plankton uses model checking not to verify software, but to verify configurations.

\vspace{-0.4cm}
\section{Conclusion and Future Work}
\vspace{-0.2cm}

We described Plankton, a formal network configuration verification tool that combines equivalence partitioning of the header space with explicit state model checking of protocol execution. Thanks to pragmatic optimizations such as partial order reduction and state hashing, Plankton produces significant performance gains over the state of the art in configuration verification. Improvements such as checking transient states, incorporating real software, partial order reduction heuristics that guarantee reduction, etc. are interesting avenues of future work.

\vskip 0.2cm
{\bf \noindent Acknowledgements: }We thank the anonymous reviewers and our shepherd Cole Schlesinger for their comments. We also thank Ratul Mahajan and Ryan Beckett for their help in integrating Bonsai with Plankton. This work was supported by NSF grants CNS-1513906 and CCF-1421575.


%
%
%
%
%
%
%
%
%
%



{\footnotesize \bibliographystyle{acm}
\bibliography{paper}}
\begin{appendices}
\section{Extended SPVP}
\label{appendix:spvp}
In extended SPVP, for each node $n$, and for each peer $n' \in \peers(n)$, $\ribinn(n')$ keeps track of the most recent advertisement of $n'$ to $n$. In addition, $\bestpath(n)$ keeps the best path that $n$ has to one of the {\it multiple} origins. Peers are connected using reliable FIFO message buffers to exchange advertisements. Each advertisement consists of a path from the advertising node to an origin. In each step (which we assume is performed atomically) a node $n$ takes an advertisement $p$ from the buffer connected to peer $n'$, and applies an import filter on it ($\importf{n}{n'}(p)$). $n$ then updates $\ribinn(n')$ with the new imported advertisement. In our extension, we assume that each node has a ranking function $\lambda$ that provides a {\it partial order} over the paths acceptable by the node. $n$ then proceeds by updating its $\bestpath$ to the highest ranking path in $\ribinn$. If the best path in $\ribinn$ have the same rank as the current best path and that path is still valid, $\bestpath(n)$ will not change. If the best path is updated, $n$ advertises the path to its peers. For each peer $n'$, $n$ applies the export filter on the path ($\exportf{n}{n'}(\bestpath(n))$) before sending the advertisement.

The import filter, the export filter, and the ranking functions are abstract notions that will be inferred from the
configuration of the node. We make reasonable assumptions about these notions (Appendix~\ref{appendix:assumptions}).
Attributes such as local pref, IGP cost, etc. are accounted for in the the ranking function and the import/export filters.

If a session between two peers fails, the messages in the buffer are lost and the buffer
cannot be used anymore. We assume that when this happens, each peer gets $\bot$ as the advertised path.
Additionally, to be able to model iBGP, in extended SPVP we allow the ranking function of any node $n$ to change at any time during the execution of the protocol. This is to model cases in which for example a link failure causes IGP costs to change. In such cases we assume that $n$ receives a special message to recompute its $\bestpath$ according to the new ranking function.

The state of network at each point in time consists of the values of $\bestpath$, $\ribin$, and the contents of the message buffers.
In the initial state $S_0$, the best path of the origins is $\eps$ and the best path of the rest of the nodes are $\bot$ which
indicates that the node has no path.
Also for any $n,n' \in V$, $\ribinn(n')$ is $\bot$.
We assume that initially the origins put their advertisements in the message buffer to
their peers, but the rest of the buffers are empty.

An (partial) execution of SPVP is a sequence of states $\pi$ which starts from $S_0$ and each state is reachable by a single atomic step of SPVP on the state before it. A converged state in SPVP, is a state in which all buffers are empty. A complete execution is an execution that ends in a converged state. It is well known that there are configurations which can make SPVP diverge in some or all execution paths. However, our goal is to only to check the forwarding behavior in the converged states, through explicit-state model checking. So, we define a much simpler model that can be used, without compromising the soundness or completeness of the analysis (compared to SPVP).




\section{Assumptions}
\label{appendix:assumptions}
We make the following assumptions in in our theoretical model.
\begin{packeditemize}
\item Both import/export filters return $\bot$ if the filter rejects the advertisement according to the configuration.
\item All import filters reject paths that cause forwarding loops. They also do not alter the path (unless the path is rejected).
\item All export filters for each node $n$ not rejecting an advertisement, will append $n$ at the end of the advertised path. No other modification is made to the path.
\item Path $\bot$ has the lowest ranking in all ranking functions.
\item The import/export filters never change during the execution of the protocol. Note that we do not make such assumption for the ranking functions.
\end{packeditemize}

Note that these are reasonable assumptions with respect to how real world protocols (especially BGP) work.

\section{Proof of Theorems}
\label{appendix:proofs}

\begin{proof}[Proof of Theorem~\ref{th:osvp}]
For a complete execution $\pi = S_0, S_1, ..., S_c$ of SPVP and a state $S_i$ in that execution, for any node $n$,
we say that $n$ is converged in $S_i$ for execution $\pi$ iff $n$ has already picked the path
it has in the converged state ($S_c$) and does not change it:
\begin{multline*}
  \converged_\pi(n, S_i) \triangleq \\
   \forall j . i \leq j \leq c : \bestpath_{S_j}(n) = \bestpath_{S_c}(n)
\end{multline*}
It it clear that when a node converges in an state, it remains converged (according to the definition above).

\begin{lemma}
  In any complete execution $\pi = S_0, S_1, ..., S_c$ of SPVP, for any sate $S_i$,
  for any two nodes $n$ and $n'$, if $\exists l : n' = \bestpath_{S_i}(n)[l]$
  \footnote{For a path $P = p_0, p_1, ..., p_n$, we denote $p_i$ by $P[i]$ and $p_i, p_{i+1}, ..., p_n$ by $P[i:]$.}
  , and $\bestpath_{S_i}(n') \neq \bestpath_{S_i}(n)[l:]$, there is a $j$ ($i < j \leq c$)
  such that $\bestpath_{S_j}(n) \neq \bestpath_{S_i}(n)$.

\end{lemma}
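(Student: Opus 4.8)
The plan is to prove the statement directly by induction on the position $l$, using the fact that $\pi$ is a \emph{complete} execution: since the final state $S_c$ has all buffers empty, every advertisement that is ever enqueued is eventually processed before $S_c$. Throughout I use the assumptions of Appendix~\ref{appendix:assumptions} --- import filters do not alter an accepted path and export filters only append the exporting node --- so that whenever a node picks a neighbor as its next hop, its best path is exactly that neighbor's advertised path extended by the node. In particular, if $\bestpath_{S_i}(n) = P$ with next hop $h = P[1]$, then $\ribinn(h)$ at $S_i$ corresponds to the tail $P[1:]$, and any step that changes $\ribinn(h)$ away from $P[1:]$ (while $n$ still points at $h$) forces $\bestpath(n)$ to differ from $P$: either the recomputed path via $h$ changes, or $n$ switches to another next hop. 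This ``a change at the chosen next hop forces a change at the node'' principle is the engine of the whole argument.

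For the base case $l = 1$, the hypothesis says $n$'s immediate next hop $h = n' = P[1]$ currently satisfies $\bestpath_{S_i}(h) \neq P[1:]$, whereas $\ribinn(h) = P[1:]$. Since $h$ advertises only when its best path changes (Appendix~\ref{appendix:spvp}), the advertisements it sends reflect consecutively distinct paths; as the last one $n$ processed gave $\ribinn(h) = P[1:]$ and $h$'s current path differs, $h$ must have sent a later, different advertisement that $n$ has not yet processed, so it is still pending in the FIFO buffer. Because $S_c$ is converged, $n$ processes it at some step $j \le c$; at that step $\ribinn(h)$ becomes a value $\neq P[1:]$, and by the principle above $\bestpath_{S_j}(n) \neq \bestpath_{S_i}(n)$ (and if $n$ had already changed earlier, we are likewise done).

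For the inductive step ($l \ge 2$) I branch on the next hop $h = P[1]$. If $\bestpath_{S_i}(h) \neq P[1:]$, there is already an inconsistency at position $1$ and the base case gives the conclusion directly. Otherwise $\bestpath_{S_i}(h) = P[1:]$, so $n' = \bestpath_{S_i}(h)[l-1]$ and $\bestpath_{S_i}(n') \neq P[l:] = \bestpath_{S_i}(h)[(l-1):]$; applying the induction hypothesis to $h$ at position $l-1$ yields a first step $j_h \in (i,c]$ at which $h$ changes its best path away from $P[1:]$. At $j_h$, $h$ advertises this new, different path to its peer $n$; since $S_c$ is converged, $n$ processes it before $S_c$, at which point $\ribinn(h) \neq P[1:]$ and, by the same principle, $n$'s best path departs from $P$. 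Either way $n$ changes at some step in $(i,c]$.

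The main obstacle is the FIFO/rib-in bookkeeping inside that core principle: I must argue that the offending advertisement (the stale one at the current next hop, or the one $h$ emits when it changes) is genuinely processed before convergence \emph{and} genuinely forces $\bestpath(n)$ to leave $P$. The delicate points are (i) ruling out that $n$ silently keeps $P$ after the rib-in update --- handled by the fact that every peer's advertisement carries that peer as its next hop, so only $h$ can supply the exact tail $P[1:]$ (whose next hop is $h$), and no alternative rib-in can reconstruct $P$; (ii) import rejection, where the new rib-in becomes $\bot$ and $n$'s path via $h$ simply vanishes, still forcing a change; and (iii) oscillation of $h$'s path, handled by noting that we only need $\ribinn(h)$ to differ from $P[1:]$ at the single step where the relevant advertisement is processed, not permanently. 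Notably, this argument never appeals to suffix consistency of the converged state $S_c$; the completeness of $\pi$ together with the next-hop principle suffices.
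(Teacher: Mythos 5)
Your proof follows essentially the same route as the paper's: induction on $l$, with the base case using completeness of $\pi$ to force a pending, different advertisement from the next hop to be processed before $S_c$, and the inductive step reducing to the next hop at position $l-1$ and then reapplying the base case. The only subcase the paper treats explicitly that you should add is failure of the session between $n$ and its next hop after $S_i$ --- the buffered advertisement is then lost rather than processed, but $n$ receives $\bot$ on that session, which still forces $\bestpath(n)$ to leave $P$, so your argument goes through unchanged.
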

\begin{proof}

This can be shown by a simple induction on the length of the prefix of the best path of $n$ from $n$ to up to $n'$ ($l$).
If $l = 1$ (i.e the two nodes are directly connected) then either $n'$ will advertise its path to $n$
and $n$ and will change its path
or the link between $n$ and $n'$ fails in which case $n$ will receive an advertisement with $\bot$ as the path (Section~\ref{sec:spvp}), which causes
$n$ to change its path. Note that the argument holds even if the ranking function of $n$ or $n'$ changes.
If $l > 1$, assuming the claim holds for lengths less than $l$, for $n''= \bestpath_{S_i}(n)[0]$, either $\bestpath_{S_i}(n'') \neq \bestpath_{S_i}(n)[1:]$
in which case due to induction hypothesis the claim holds, or $\bestpath_{S_i}(n'') = \bestpath_{S_i}(n)[1:]$
in which case we note that $n' = \bestpath_{S_i}(n'')[l-1]$, and since the length of the path $n''$ to $n'$ is less than $l$, by induction hypothesis
we know that eventually (i.e for a $j > i$) $\bestpath_{S_j}(n'') \neq \bestpath_{S_j}(n)[1:]$ and by induction hypothesis this will
lead to a change in the best path of $n$.

\end{proof}

\begin{corollary}
For any complete execution $\pi$ of SPVP , for any node $n$, any node along the best path of
$n$ in the converged state converges before $n$.
\label{col:bestpath}
\end{corollary}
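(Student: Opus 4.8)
The plan is to derive the corollary directly from the Lemma by contraposition, applied at the step where $n$ itself converges. Fix the complete execution $\pi = S_0, \dots, S_c$, a node $n$, and let $P = \bestpath_{S_c}(n)$ be its best path in the converged state; let $n' = P[l]$ be an arbitrary node on $P$. Let $t$ be the first step from which $n$'s best path is permanently $P$ (this exists because $\pi$ ends in $S_c$); by definition $\converged_\pi(n, S_t)$ holds and $t$ is exactly the convergence step of $n$. The goal is to establish $\converged_\pi(n', S_t)$, i.e.\ that $n'$ is already converged by the time $n$ converges, which is the intended reading of ``converges before $n$''.

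First I would invoke the contrapositive of the Lemma at every step $i$ with $t \le i < c$. At each such $i$ we have $\bestpath_{S_i}(n) = P$, so $n' = \bestpath_{S_i}(n)[l]$; and since $n$ never changes its path after step $t$, no $j$ with $i < j \le c$ satisfies $\bestpath_{S_j}(n) \ne \bestpath_{S_i}(n)$. The Lemma's conclusion thus fails, so its hypothesis must fail as well, forcing $\bestpath_{S_i}(n') = \bestpath_{S_i}(n)[l:] = P[l:]$. Hence $n'$ holds the fixed suffix $P[l:]$ at every step of $[t, c)$. Note that no separate induction along $P$ is needed here, since the Lemma applies directly to $n$ and each individual $n'$ on its path.

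Next I would extend this to the endpoint $i = c$ so the value is constant on all of $[t,c]$. The Lemma cannot be applied at $i = c$ (it requires a strictly later step $j \le c$), so I would instead appeal to consistency of the converged state: with every message buffer empty in $S_c$, each node's best path must agree with the best path of its next hop, and inducting along $P$ yields $\bestpath_{S_c}(n') = P[l:]$ too. Combining the two ranges gives $\bestpath_{S_i}(n') = P[l:]$ for all $i \in [t, c]$, which is precisely $\converged_\pi(n', S_t)$; thus $n'$ converges no later than $n$.

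I expect the main obstacle to be exactly this boundary at the converged state, together with pinning down the meaning of ``before.'' The Lemma speaks only about strictly later steps, so $S_c$ itself must be handled by a separate (routine but genuinely necessary) fact that empty-buffer states are path-consistent. I would also flag that ``before'' is obtained here in the non-strict sense ($n'$ is converged by the step at which $n$ converges); the strict statement for the immediate next hop would additionally use FIFO delivery and the freshness of the advertisement $n$ processes when it adopts $P$, but that refinement is not required for how the corollary is used downstream.
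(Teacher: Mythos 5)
Your proposal is correct and matches the intended derivation: the paper states this corollary without proof as an immediate consequence of the lemma, and your contrapositive argument (at each step $i\ge C_\pi(n)$, $n$'s path never changes afterwards, so the lemma forces $\bestpath_{S_i}(n')=\bestpath_{S_c}(n)[l:]$) is exactly that reading. The only simplification available is at the boundary $i=c$: taking $i=c$ in the lemma makes its conclusion unsatisfiable (no $j$ with $c<j\le c$ exists), so the hypothesis must already fail there, which gives $\bestpath_{S_c}(n')=\bestpath_{S_c}(n)[l:]$ without your separate appeal to path-consistency of empty-buffer states.
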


Now consider a complete execution $\pi = S_0, S_1, ..., S_c$ of SPVP.
We will construct a complete execution of RPVP with $|N|$ steps (where $N$ is set of all nodes) resulting in the same converged state as $S_c$.
We start with a topology in which all the links that have failed during the execution of SPVP are already failed.
For any node $n$, we define $C_\pi(n) = min \{ i | \converged_\pi(n, S_i) \}$.
Consider the sequence $n_1, n_2, ..., n_{|N|}$ of all nodes sorted in the increasing order of $C_\pi$.
Now consider the execution of RPVP $\pi' = S'_0, S'_1, ..., S'_{|N|}$  which starts from the initial state of RPVP
and in each state $S'_i$, (a) either node $n_{i+1}$ is the picked enabled node and the node $p_i = \bestpath_{S_c}(n_{i+1})[0]$ is the picked best peer, or
(b) in case $p_i = \bot$, nothing happens and $S'_{i+1} = S'_i$.

First, note that (modulo the repeated states in case \textit{b}), $\pi'$ is a valid execution of
of RPVP: at each state $S'_i$ (in case \textit{a}), $n_{i+1}$ is indeed an enabled node since its best path at that state is $\bot$ and
according to corollary~\ref{col:bestpath}, $p_i$ has already picked its path:
Also $p_i$ will be in the set of best peers of $n_{i+1}$ (line~\ref{line:bestpeer} in RPVP).
Assume this is not the case, i.e there exists another peer $p'$ that can advertise a better path.
This means that in $S_c$ of SPVP, $p'$ can send an advertisement that is better
(according to the version of ranking functions in $S_c$) than the converged path
of $n_{i+1}$. This contradicts the fact that $S_c$ is a converged path.

Second, note that $S'_{|N|}$ is a converged state for RPVP, because otherwise, using similar reasoning
as above, $S_c$ can not be converged. Also it is easy to see that $\bestpath_{S'_{|N|}} = \bestpath_{S_c}$

Finally, note that in $\pi'$, once a node changes its best path from $\bot$, it does not
change its best path again.


\end{proof}

\begin{proof}[Proof of Theorem~\ref{th:det}]
We begin by making two observations about RPVP that are key to the proof:
\begin{packeditemize}
\item RPVP for a prefix can never converge to a state having looping paths.
\item If a node $u$ adopts the best path of a neighbor $v$, $v$ will be next hop of $u$.
\end{packeditemize}

Consider any converged state $S$. The theorem states that any partial execution that is consistent with $S$ can be extended to a full execution that leads to $S$.
We prove the theorem by induction on the length of the longest best path in $S$.

{\noindent \bf Base case: }If in a network a converged state exists where the best path at each node is of length 0, that means that each node is either an origin or doesn't have a
best path for the prefix. Since any execution apart from the empty execution (where no protocol event happens) is not consistent with this state, the theorem holds.

{\noindent \bf Induction hypothesis: }If a converged state exists in a network such that all best paths are of length $k$ or less, then any partial execution that is consistent with the
converged state can be extended to a full execution that reaches the converged state.

{\noindent \bf Induction step: }Consider a network with a converged state $S$ such that the longest best path is of length $k+1$. We first divide the nodes in the network into two classes --- $N$,
which are the nodes with best paths of length $k$ or less, and $N'$, which are nodes with best paths of length $k+1$. Consider a partial execution $\pi$ that is consistent with $S$.
We identify two possibilities for $\pi$:\\

{\it Case 1: }Every node that has executed in $\pi$ falls into $N$. In this case, we define a smaller network which is the subgraph of the original network, induced by the nodes in $N$.
In this network, the path selections made in $S$ will constitute a converged state. This is because in the original network, in the state $S$, the nodes in $N$ are not enabled to make state
changes. So, we can extend $\pi$ such that we get an execution $\pi'$ where nodes in $N$ match the path selections in $S$. Now, we further extend $\pi'$ with steps where each node in $N'$ reads
the best path from the node that is its nexthop in $S$ and updates its best path. When every node in $N$ has done this, the overall system state will reach $S$.\\

{\it Case 2: }At least one node in $N'$ has executed in $\pi$. In this case, we observe that since $\pi$ is consistent with $S$, by the definition of a consistent execution, no node
in the network has read the state of any node in $N'$. So, we can construct an execution $\pi'$ which has the same steps as $\pi$, except that any step taken by a node in $N'$ is skipped. As
in the previous case, $\pi'$ can be extended to reach a converged state in the subgraph induced by $N$. We extend $\pi$, first by using the steps that extend $\pi'$, and if necessary, taking
additional steps at nodes from $N'$ to reach $S$.
\end{proof}
\end{appendices}

\end{document}